\newcommand{\N}{\mathbb{N}}
\newcommand{\Z}{\mathbb{Z}}
\newcommand{\cC}{\mathcal{C}}
\newcommand{\URi}{\mathcal{UR}^-_r}
\newcommand{\ULi}{\mathcal{UL}^-_r}
\newcommand{\BRi}{\mathcal{BR}^-_r}
\newcommand{\BLi}{\mathcal{BL}^-_r}
\newcommand{\URe}{\mathcal{UR}^+_r}
\newcommand{\ULe}{\mathcal{UL}^+_r}
\newcommand{\BRe}{\mathcal{BR}^+_r}
\newcommand{\BLe}{\mathcal{BL}^+_r}
\newcommand{\bigvectwo}[2]{\begin{pmatrix}#1\\#2\end{pmatrix}}
\newcommand{\bigmattwo}[4]{\begin{pmatrix}#1&#2\\#3&#4\end{pmatrix}}
\newcommand{\vectwo}{\bigvectwo}
\newtheorem{theorem}{Theorem}[section]
\newtheorem{corollary}{Corollary}
\newtheorem{lemma}[theorem]{Lemma}
\theoremstyle{definition}
\newtheorem{definition}[theorem]{Definition}
\newtheorem{remark}{Remark}
\newtheorem{example}[theorem]{Example}
\begin{document}

\title{Identifying Codes of Degree 4 Cayley Graphs over Abelian Groups}
\author{Crist\'{o}bal Camarero  and Carmen Mart\'{\i}nez  and Ram\'{o}n Beivide}
\date{}

\maketitle




\bigskip


\begin{abstract}
In this paper a wide family of identifying codes over regular Cayley graphs of degree four which are built over finite Abelian groups is presented. Some of the codes in this construction are also perfect. The graphs considered include some well-known graphs such as tori, twisted tori and Kronecker products of two cycles. Therefore, the codes can be used for identification in these graphs. Finally, an example of how these codes can be applied for adaptive identification over these graphs is presented.
\end{abstract}

\section{Introduction}

Systems are continuously growing in complexity, thus becoming composed of more and more basic elements. This originates many important problems which must
be solved in order to use the systems effectively. Among these problems are the parallelization of algorithms to exploit the full system, the development of networks efficient in cost and in performance, and the automatic detection of failures, and consequently, reconfiguration and recovery of the system.\par

In this paper the problem of the detection of faults will be explored, with principal aim at multiprocessor systems, in which
the basic elements are computers. These computers are connected between them with some topology which can be modeled by a (usually undirected) graph.
The graphs under consideration will be regular Cayley graphs of degree four, which englobe a multitude of well-known interconnection networks.\par

The major mechanisms for fault-diagnosis are the PMC model by Preparata, Mezte and Chien \cite{Preparata}
and the comparison model by Malek \cite{Malek}. In the PMC model, each unit tests each neighbor to see if it is faulty
and the tests performed by faulty units are unreliable. Then, all the information is used globally to determinate the faulty vertices.
A bit of information is generated for each edge and direction, which is a considerable volume of information. Later, Karpovsky, Chakrabarty and Levitinin in \cite{Karpovsky} proposed identifying codes with the aim of reducing the necessary information for fault diagnosis.\par

Identifying codes can be seen as a subset of vertices of a graph which allows the identification of single vertices under some conditions. When identifying codes apply for diagnosis, for each codeword a bit of information is generated, which is significant less information than the number of edges, as required in the PMC model. Since minimizing the global information is equivalent to minimizing the density of the code, a lot of work has been done to find codes with minimum density over different infinite graphs. There are hundreds of papers \cite{Lobsteinweb} on identifying codes, many of them looking for the ones which attain the minimal density bound over different graphs.
However, there are not many constructive methods for finding wide families of these codes over graphs which correspond to topological models for multiprocessor systems.
Moreover, technical aspects of the diagnosis process using identifying codes have not been extensively considered. An algorithmic approach to the adaptive identification in multiprocessor systems has been considered, for example in \cite{Benhaimadaptive} and \cite{Benhaimadaptive2}.\par

In this paper a constructive method for finding a wide family of identifying codes over degree four Cayley graphs over finite Abelian groups is provided. The codes are built using subgroups, which implies that the codewords are evenly distributed. The identifying radius as well as the covering radius are characterized in terms of the generators of the subgroup. Moreover, such codes in some cases are also perfect codes, but for a different radius than the identifying one. Hence, in Section \ref{sec:Cayley} the graphs that are going to be consider in the paper are defined. In Section \ref{sec:identifyingdefinitions} some basic definitions of identification are recalled and a few new ones that will be needed to make the general construction are established. In Section \ref{sec:construccion} the identifying radius for a family of group codes is determined. Moreover, both the case in which the codes constructed are not only identifying but also perfect, and the density of the construction are considered. Finally in Section \ref{sec:algoritmo} an example of how to apply these codes for adaptive identification in faulty environments modeled by Cayley graphs is shown. Some of the results presented in this paper were first announced in \cite{3ICMTA}.

\section{Cayley Graphs over Two Dimensional Lattices} \label{sec:Cayley}

In this section some definitions and results concerning two dimensional lattices are introduced. Then, degree four regular Cayley graphs over these lattices are defined for obtaining a metric space in which identifying and perfect codes are possible to be constructed.\par

Given two linearly independent integer vectors $m_1, m_2 \in \mathbb{Z}^2$ let us consider $$ \langle m_1 , m_2 \rangle   = \{\lambda_1 m_1 + \lambda_2 m_2 \ | \ \lambda_i \in \mathbb{Z}, i=1,2 \},$$ the integer lattice (or additive group) generated by vectors $m_1 , m_2$. As usual, the quotient group can be defined as: $$\mathbb{Z}^{2}/\langle m_1 , m_2 \rangle  \mathbb{Z}^{2} = \{ v + \langle m_1 , m_2 \rangle   \ | \ v \in \mathbb{Z}^{2} \}.$$ The following theorem will be used (for the proof see \cite{Fiolmultidimensional}):

\begin{theorem}\label{7:theo:cardinal-grupo} $\mathbb{Z}^{2}/\langle m_1 , m_2 \rangle \mathbb{Z}^{2}$ has $| \det(M) |$ elements, where $M$ is the matrix whose columns are the linearly independent vectors $\{ m_1 , m_2 \}$.
\end{theorem}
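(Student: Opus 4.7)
The plan is to establish the theorem via the Smith Normal Form of $M$. The underlying idea is that the group $\mathbb{Z}^{2}/\langle m_1, m_2\rangle\mathbb{Z}^{2}$ is invariant under both a change of $\mathbb{Z}$-basis of the ambient lattice and a change of generating set of the sublattice, and using these two freedoms we can reduce $M$ to a diagonal matrix on which the cardinality of the quotient is trivial to read off.

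First, I would verify that if $P, Q \in \mathbb{Z}^{2 \times 2}$ are unimodular (i.e.\ satisfy $|\det(P)| = |\det(Q)| = 1$), then writing $M' = PMQ$ yields column vectors $m_1', m_2'$ such that $\mathbb{Z}^{2}/\langle m_1', m_2'\rangle\mathbb{Z}^{2} \cong \mathbb{Z}^{2}/\langle m_1, m_2\rangle\mathbb{Z}^{2}$. The key observations are that right multiplication by $Q$ only changes the $\mathbb{Z}$-basis of the same sublattice (and hence does not change the sublattice at all), while left multiplication by $P$ induces a group automorphism of $\mathbb{Z}^{2}$ that descends to an isomorphism between the two quotients. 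Additionally, $|\det(M')| = |\det(M)|$, so the quantity we want to compute is also preserved.

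Next, I would run the Euclidean algorithm on the entries of $M$, translated into elementary row and column operations over $\mathbb{Z}$, to reduce $M$ to a diagonal matrix $\mathrm{diag}(d_1, d_2)$. This is the classical Smith Normal Form reduction for a $2 \times 2$ matrix: whenever two nonzero integers appear in the same row or column, they can be replaced by their gcd and zero via unimodular operations, and iterating this procedure cleans out the off-diagonal entries. Termination follows because the minimal nonzero entry (in absolute value) strictly decreases at each such step until the reduction completes.

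Finally, the diagonal case is immediate: the quotient splits as $\mathbb{Z}/d_1\mathbb{Z} \times \mathbb{Z}/d_2\mathbb{Z}$, which has $|d_1 d_2| = |\det(\mathrm{diag}(d_1, d_2))| = |\det(M)|$ elements, as desired. The main obstacle in writing this out cleanly is the bookkeeping for the Smith reduction together with the invariance claim for the quotient under unimodular operations; both facts are standard but deserve to be stated carefully, which is presumably why the theorem is deferred to the cited reference.
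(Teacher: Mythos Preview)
Your argument via Smith Normal Form is correct and is the standard route to this result: unimodular row and column operations preserve both the isomorphism class of the quotient and the absolute value of the determinant, and the diagonal case is immediate. Note, however, that the paper does not actually supply its own proof of this theorem; it simply cites \cite{Fiolmultidimensional}, so there is no in-paper argument to compare against, and your proposal in fact fills in what the paper omits.
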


Following the notation in \cite{Fiolmultidimensional}, two vectors $v_1, v_2$ in the same coset of the quotient group $\mathbb{Z}^{2}/\langle m_1 ,m_2 \rangle  \mathbb{Z}^{2}$ will be denoted as $v_1 \equiv v_2 \pmod{M}$. The quotient group $\mathbb{Z}^{2}/\langle m_1 ,m_2 \rangle  \mathbb{Z}^{2}$ will be denoted as $\mathbb{Z}^{2}/M\mathbb{Z}^{2}$ and a subgroup $\langle t_1 ,t_2 \rangle   \subseteq \mathbb{Z}^{2}/\langle m_1 ,m_2 \rangle  \mathbb{Z}^{2}$ as $T + \mathbb{Z}^{2}/M\mathbb{Z}^{2}$, where $T$ is the matrix whose columns are $t_1$ and $t_2$. As usual, $\mathcal{M}_{2 \times 2}(\mathbb{Z})$ denotes the ring of the square integer matrices of size two. In the remainder of the paper, it will be assumed $M \in \mathcal{M}_{2 \times 2}(\mathbb{Z})$ to be non-singular.\par

The Cayley graphs considered in the paper are defined over two-dimensional lattices. As usual, a Cayley graph is defined:

\begin{definition}
The \emph{Cayley graph} over a group $\Gamma$ and a set $A\subset \Gamma$ is the graph $Cay(\Gamma;A)$ with vertex set $V=\Gamma$ and edges
	$$E=\{(v,v+g) \mid v\in \Gamma,g\in A\}$$
The set $A$ is called the \emph{adjacency set}.
\end{definition}

In this paper Cayley graphs over the quotient rings defined above are going to be considered. Therefore, given $M \in \mathcal{M}_{2 \times 2}(\mathbb{Z})$ a non-singular matrix and $A = \{a_1 , a_2\}$ a subset of vectors of $\mathbb{Z}^{2}$, the Cayley graph over the group $\mathbb{Z}^{2}/M \mathbb{Z}^2$ with adjacency $A \subset \mathbb{Z}^{2}$ will be denoted by $\mathcal{G}(M; A)$. Therefore, every vertex $u \in \mathbb{Z}^{2}/M \mathbb{Z}^2$ of $\mathcal{G}(M; A)$ is adjacent to $u \pm A \pmod{M}$. Moreover, for convenience, $E = \{e_1, e_2\}$ will denote the orthonormal two-dimensional basis, where $e_1 = \vectwo{0}{1}$ and $e_2 = \vectwo{1}{0}$. Without loss of generality, it can be assumed that any $\mathcal{G}(M; A)$ has as set of jumps $A = E$. Therefore, $\mathcal{G}(M) = \mathcal{G}(M; E)$. This way, any $\mathcal{G}(M)$ is locally a two-dimensional orthogonal mesh, with wrap-around edges completing the peripheral adjacency. The graphs considered in the present paper constitute the two dimensional case of the general family of multidimensional circulants defined in \cite{Fiolmultidimensional}. Moreover, as it is stated in the following result, these are in fact all the Cayley graphs of degree four over finite Abelian groups.\par

\begin{lemma} \cite{Bermond} \label{lemma:Bermond} Let $\Gamma$ be a finite Abelian group, $A=\{\pm a,\pm b\}\subseteq \Gamma$ and consider $Cay(\Gamma;A)$. Then, there exists $M \in \mathcal{M}_{2 \times 2}(\mathbb{Z})$ such that
$$Cay(\Gamma;A) \cong \mathcal{G}(M).$$
\end{lemma}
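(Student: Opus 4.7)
The plan is to build an explicit group homomorphism $\mathbb{Z}^2 \twoheadrightarrow \Gamma$ that sends the canonical basis $\{e_1,e_2\}$ onto $\{a,b\}$ and then read off $M$ as the defining matrix of its kernel.

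First I would define
\[
\phi:\mathbb{Z}^2\to\Gamma,\qquad \phi(x_1 e_1+x_2 e_2)=x_1 a+x_2 b.
\]
This is a group homomorphism because $\Gamma$ is Abelian, and by hypothesis (implicit in the lemma, since $\mathcal{G}(M)$ is connected whenever $\{e_1,e_2\}$ generates $\mathbb{Z}^2$, which it does) we may assume $A$ generates $\Gamma$, so $\phi$ is surjective; if instead $\langle a,b\rangle\subsetneq\Gamma$, the Cayley graph $Cay(\Gamma;A)$ is disconnected and the statement must be interpreted on a single component, reducing to the same situation.

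Next, let $K=\ker\phi\subseteq\mathbb{Z}^2$. Because $\Gamma$ is finite and $\phi$ is surjective, $\mathbb{Z}^2/K\cong\Gamma$ is finite, so $K$ has full rank $2$ in $\mathbb{Z}^2$. Every subgroup of $\mathbb{Z}^2$ is itself a free abelian group of rank at most $2$ (Smith normal form, or more elementarily the structure theorem for finitely generated abelian groups), hence $K$ is generated by two linearly independent integer vectors $m_1,m_2$. Setting $M=(m_1\ m_2)\in\mathcal{M}_{2\times 2}(\mathbb{Z})$ gives $K=M\mathbb{Z}^2$, and $M$ is non-singular because $K$ has full rank; note that $|\det M|=|\Gamma|$ by Theorem~\ref{7:theo:cardinal-grupo}, consistent with the first isomorphism theorem.

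By the first isomorphism theorem, $\phi$ factors through an Abelian group isomorphism
\[
\bar\phi:\mathbb{Z}^2/M\mathbb{Z}^2\longrightarrow\Gamma,\qquad \bar\phi(e_1)=a,\ \bar\phi(e_2)=b.
\]
Because $\bar\phi$ is a bijection on vertex sets that maps the generating set $E=\{\pm e_1,\pm e_2\}$ of $\mathcal{G}(M)$ onto the adjacency set $A=\{\pm a,\pm b\}$ of $Cay(\Gamma;A)$, it preserves the edge relation: $u\sim v$ in $\mathcal{G}(M)$ iff $v-u\in E \pmod M$ iff $\bar\phi(v)-\bar\phi(u)\in A$ iff $\bar\phi(u)\sim\bar\phi(v)$ in $Cay(\Gamma;A)$. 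Hence $\mathcal{G}(M)\cong Cay(\Gamma;A)$.

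The only non-routine step is producing the matrix $M$ itself, i.e.\ knowing that the sublattice $K\subseteq\mathbb{Z}^2$ is actually of the form $M\mathbb{Z}^2$ for a single $2\times 2$ matrix; everything else is bookkeeping with the first isomorphism theorem. I would point to Smith normal form (or the classical fact that subgroups of $\mathbb{Z}^n$ are free of rank $\le n$) as the one external input, and otherwise the argument is self-contained.
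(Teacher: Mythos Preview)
Your argument is correct and is the standard one. Note, however, that the paper does not actually prove this lemma: it is quoted from \cite{Bermond} and stated without proof, so there is no ``paper's own proof'' to compare against. What you have written is essentially the classical argument (surjection $\mathbb{Z}^2\twoheadrightarrow\Gamma$ via $e_i\mapsto$ generators, kernel is a full-rank sublattice hence $M\mathbb{Z}^2$, first isomorphism theorem, then check edges), which is presumably what the cited reference contains as well.
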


Many topological models for interconnection networks are in fact Cayley graphs as the ones that are being considered in this paper. For example, tori, twisted tori \cite{Sequin} and circulant graphs of degree four such as the ones in \cite{Coppersmith} and \cite{Fiolnetworks}. As it was proved in \cite{ISIT10}, also the Kronecker product of two cycles is a particular case of $\mathcal{G}(M)$.

\begin{example} Let $M = \begin{pmatrix} 12 & 5 \\ 1 & 5 \end{pmatrix}$. In Figure \ref{7:fig:ej1} a graphical representation of the graph $\mathcal{G}(M)$ in which
the representative at minimum distance from zero of every vertex is shown.

\begin{figure}
   \centering
      \includegraphics[width=.6\columnwidth]{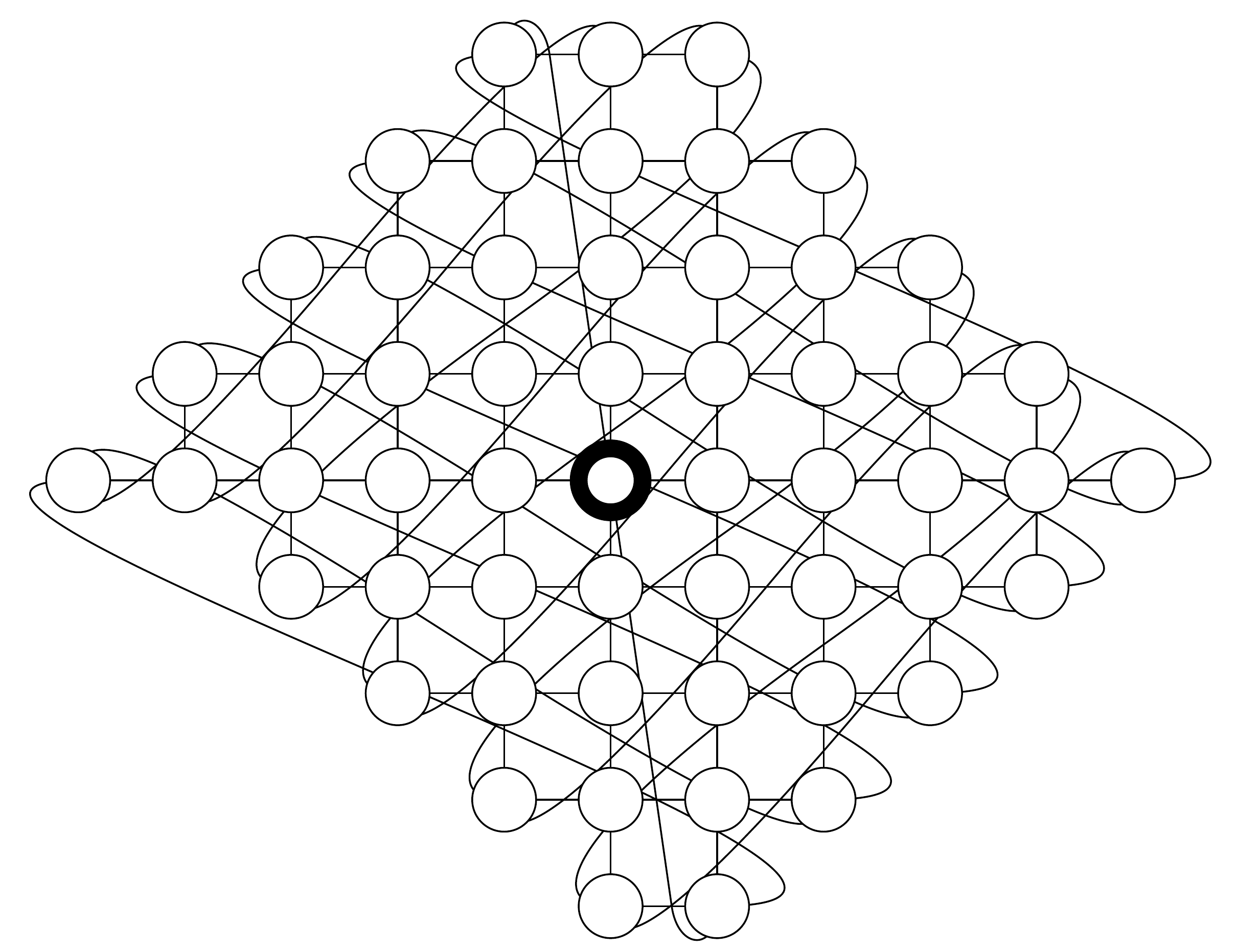}
      \caption{$\mathcal{G}(M)$ represented in minimum distances.}
      \label{7:fig:ej1}
\end{figure}
\end{example}

Finally, if $d_{M}(v_1 , v_2)$ denotes the distance between vertices $v_1$ and $v_2 $ of $\mathcal{G}(M)$, then it can be calculated as: $$d_{M}(v_1 , v_2) = |\lambda_1| + |\lambda_2|,$$ with $\lambda_1, \lambda_2 \in \mathbb{Z}$ such that $$v_1 - v_2 \equiv \lambda_1 e_1 + \lambda_2 e_2 \pmod{M}$$ and $|\lambda_1| + |\lambda_2| \neq 0$ being minimum.

\section{Previous Definitions on Identification} \label{sec:identifyingdefinitions}

The \emph{ball} of radius $r$ centered at $v\in V$ is defined as: $$B_r(v)=\{w\in V \mid  d(v,w)\leq r\}.$$

Then, it is also said that a vertex $v$ \emph{$r$-covers} $w$, for any $w \in B_r(v)$. A \emph{code} $\cC$ is a nonempty set of vertices, whose elements are called \emph{codewords}. Given a vertex $v\in V$ the set of codewords which $r$-cover $v$ are denoted as: $$K_r(v)=\cC \cap
B_r(v).$$ Moreover, two vertices $v,w\in V$ are said to be \emph{$r$-separated} if $K_r(v)\neq K_r(w)$.

\begin{definition}\cite{Karpovsky} A code $\cC$ of a graph $G$ is said
\emph{$r$-identifying} if the sets $K_r(v)$ are all nonempty and different, that is,
every pair of vertices are $r$-separated and each vertex is $r$-covered by
at least one codeword.
\end{definition}

Although the aim of this paper is to construct identifying codes over $\mathcal{G}(M)$ graphs, the problem over the infinite mesh will be considered before. Previous papers dealing with the problem of identification over the infinite mesh are, for example, \cite{Charonsmallradius}, \cite{Honkala} and \cite{Charon}. Therefore, let us define the infinite mesh as the Cayley graph $Cay(\Z^2,E)$.
This graph will be denoted as $\mathbb{Z}^{2}$ whenever there is no possibility of confusion. Also,
the elements in $\Z^2$ will be in column form, for convenience. The following new definitions will be also needed.

\begin{definition} The following subsets of $\Z^2$ are defined:
\begin{align*}
\URi&=\left \{\vectwo{x}{y} \in \mathbb{Z}^{2} \mid 0\leq x,y \wedge |x|+|y|=r \right \}\\
\ULi&=\left \{\vectwo{x}{y} \in \mathbb{Z}^{2}\mid x\leq 0\leq y \wedge |x|+|y|=r \right \}\\
\BRi&=\left \{\vectwo{x}{y} \in \mathbb{Z}^{2}\mid y\leq 0\leq x \wedge |x|+|y|=r \right \}\\
\BLi&=\left \{\vectwo{x}{y} \in \mathbb{Z}^{2}\mid x,y\leq 0 \wedge |x|+|y|=r \right \}\\
\URe&=\left \{\vectwo{x}{y} \in \mathbb{Z}^{2}\mid 0 <  x,y \wedge |x|+|y|=r+1 \right \}\\
\ULe&=\left \{\vectwo{x}{y} \in \mathbb{Z}^{2}\mid x < 0 < y \wedge |x|+|y|=r+1 \right \}\\
\BRe&=\left \{\vectwo{x}{y} \in \mathbb{Z}^{2}\mid y < 0 < x \wedge |x|+|y|=r+1 \right \}\\
\BLe&=\left \{\vectwo{x}{y} \in \mathbb{Z}^{2}\mid x,y < 0 \wedge |x|+|y|=r+1 \right \}\\
\end{align*}
The translated sets are denoted as $\URi(v)=v+\URi$ and analogously for the other sets. A graphical representation of these sets is shown in Figure \ref{fig:conjuntos}.
\end{definition}

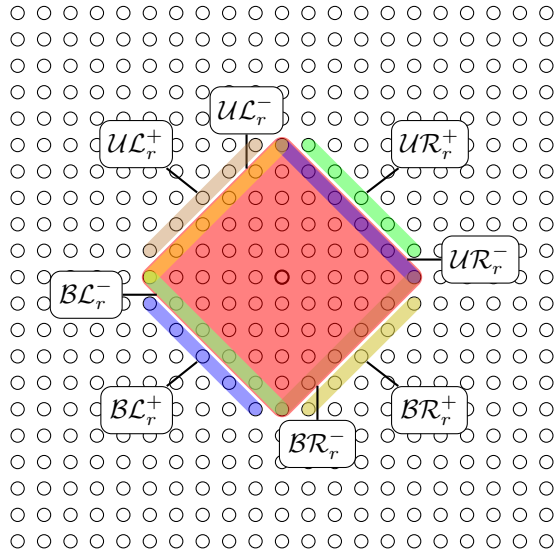
\begin{figure}
	\begin{center}
	\begin{tikzpicture}[x=10pt,y=10pt,
		every pin/.style={opacity=1,fill=white,draw},
		every pin edge/.style={opacity=1,thick},
		pinner/.style={minimum width=0pt,inner sep=0pt},
		]
	\foreach \x in{0,1,...,20}
		\foreach \y in {0,1,...,20}
			\node[draw,circle,minimum width=5pt,inner sep=0pt]
				(nodo\x_\y) at (\x,\y) {};
	\path(nodo10_10) node[draw,circle,thick,minimum width=5pt,inner sep=0pt] {};
	\def\k{2pt}
	\fill[fill=red,opacity=0.5,rounded corners]
		($(nodo10_5.south)+(0,-\k)$) -- ($(nodo15_10.east)+(\k,0)$)
		--($(nodo10_15.north)+(0,\k)$) -- ($(nodo5_10.west)+(-\k,0)$) --cycle;
	\fill[fill=blue,opacity=.4,rounded corners]
		($(nodo10_15.north east)+(-\k,\k)$)
		--($(nodo15_10.north east)+(\k,-\k)$)
		node[very near end,pinner,pin=right:$\URi$]{}
		--($(nodo15_10.south west)+(\k,-\k)$)
		--($(nodo10_15.south west)+(-\k,\k)$)
		--cycle;
	\fill[fill=blue,opacity=.4,rounded corners]
		($(nodo5_9.north east)+(-\k,\k)$)
		--($(nodo9_5.north east)+(\k,-\k)$)
		--($(nodo9_5.south west)+(\k,-\k)$)
		--($(nodo5_9.south west)+(-\k,\k)$)
		node[pinner,midway,pin=below left:$\BLe$]{}
		--cycle;
	\fill[fill=green!50,opacity=.6,rounded corners]
		($(nodo5_10.north east)+(-\k,\k)$)
		--($(nodo10_5.north east)+(\k,-\k)$)
		--($(nodo10_5.south west)+(\k,-\k)$)
		--($(nodo5_10.south west)+(-\k,\k)$)
		node[very near end,pinner,pin=left:$\BLi$]{}
		--cycle;
	\fill[fill=green,opacity=.4,rounded corners]
		($(nodo11_15.north east)+(-\k,\k)$)
		--($(nodo15_11.north east)+(\k,-\k)$)
		node[pinner,midway,pin=above right:$\URe$]{}
		--($(nodo15_11.south west)+(\k,-\k)$)
		--($(nodo11_15.south west)+(-\k,\k)$)
		--cycle;
	\fill[fill=yellow,opacity=.4,rounded corners]
		($(nodo10_15.north west)+(\k,\k)$)
		--($(nodo5_10.north west)+(-\k,-\k)$)
		node[near start,pinner,pin=above:$\ULi$]{}
		--($(nodo5_10.south east)+(-\k,-\k)$)
		--($(nodo10_15.south east)+(\k,\k)$)
		--cycle;
	\fill[fill=yellow!80!black,opacity=.6,rounded corners]
		($(nodo15_9.north west)+(\k,\k)$)
		--($(nodo11_5.north west)+(-\k,-\k)$)
		--($(nodo11_5.south east)+(-\k,-\k)$)
		--($(nodo15_9.south east)+(\k,\k)$)
		node[midway,pinner,pin=below right:$\BRe$]{}
		--cycle;
	\fill[fill=brown,opacity=.4,rounded corners]
		($(nodo15_10.north west)+(\k,\k)$)
		--($(nodo10_5.north west)+(-\k,-\k)$)
		--($(nodo10_5.south east)+(-\k,-\k)$)
		--($(nodo15_10.south east)+(\k,\k)$)
		node[near start,pinner,pin=below:$\BRi$]{}
		--cycle;
	\fill[fill=brown,opacity=.4,rounded corners]
		($(nodo9_15.north west)+(\k,\k)$)
		--($(nodo5_11.north west)+(-\k,-\k)$)
		node[midway,pinner,pin=above left:$\ULe$]{}
		--($(nodo5_11.south east)+(-\k,-\k)$)
		--($(nodo9_15.south east)+(\k,\k)$)
		--cycle;
	\end{tikzpicture}
	\end{center}
	\caption{Subsets that divide the periphery of the $r$-ball.}
	\label{fig:conjuntos}
\end{figure}

The next definition and lemma will be useful to prove the identifiability of the codes proposed in this paper.

\begin{definition} Let $\cC$ be a code of a graph $\mathcal G(M)$ and $v$ be a vertex.
The ball $B_r(v)$ is said to be \emph{$r$-fixed} by $\cC$ if there exists at least a codeword in each one of the
next sets:
$$
\URi(v)\cup\BLe(v), \
\ULi(v)\cup\BRe(v), \
\BRi(v)\cup\ULe(v), \
\BLi(v)\cup\URe(v)
$$
\end{definition}

\begin{remark} Although having the ball in every vertex $r$-fixed is not a
sufficient condition to be $r$-identifying (as the example in Figure
\ref{fig:fixednotidentifying}), this property will be used as a first step in
the proofs, since it simplifies them.
\end{remark}

\begin{figure}
    \begin{center}
    \includegraphics[width=.6\columnwidth]{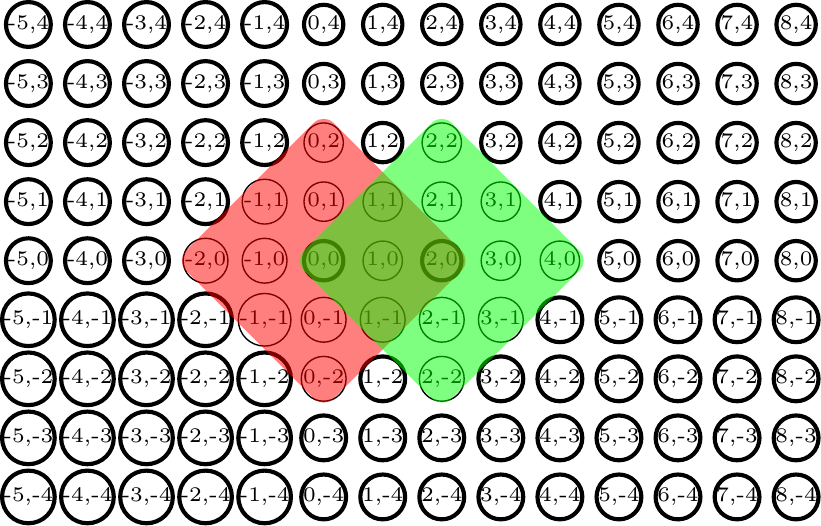}
	\end{center}
    \caption{
		Every ball is $2$-fixed, but it is not $2$-identifying.
		}
    \label{fig:fixednotidentifying}
\end{figure}

\begin{lemma}If $G$ is a Cayley graph over a group $\Gamma$
and $\cC$ is a subgroup of $\Gamma$, then $\cC$ is a code such that for all $c\in\cC$, $K_r(v+c)=c+K_r(v).$
\end{lemma}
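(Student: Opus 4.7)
The plan is to use the standard fact that translations in a Cayley graph are graph automorphisms, and then exploit the subgroup hypothesis. First, I would check that $\cC$ is a code, which only requires $\cC\neq\emptyset$; this is immediate since a subgroup contains the identity element. The nontrivial content is the translation identity $K_r(v+c)=c+K_r(v)$.

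Next, I would establish that left-translation by any element $c\in\Gamma$ is an automorphism of $Cay(\Gamma;A)$. Indeed, $\{v,v+g\}$ is an edge exactly when $g\in A$, and this is equivalent to $\{v+c,v+c+g\}$ being an edge. Consequently, the graph distance is translation-invariant: $d(v+c,w+c)=d(v,w)$ for all $v,w,c\in\Gamma$. Applying this to balls yields
\[
B_r(v+c)=\{w\in\Gamma\mid d(v+c,w)\le r\}=\{c+u\mid d(v,u)\le r\}=c+B_r(v).
\]

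Finally, I would use that $\cC$ is a subgroup and $c\in\cC$, so $c+\cC=\cC$. Therefore
\[
K_r(v+c)=\cC\cap B_r(v+c)=\cC\cap\bigl(c+B_r(v)\bigr)=(c+\cC)\cap\bigl(c+B_r(v)\bigr)=c+\bigl(\cC\cap B_r(v)\bigr)=c+K_r(v),
\]
where the penultimate equality uses that translation by $c$ is a bijection on $\Gamma$ and hence commutes with intersection.

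I do not expect any serious obstacle here; the statement is essentially a bookkeeping observation. The only subtle point is remembering that the equality $(c+\cC)\cap(c+B_r(v))=c+(\cC\cap B_r(v))$ requires translation to be a bijection (which it is, in any group), so it respects intersection.
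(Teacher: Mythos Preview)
Your argument is correct. The paper actually states this lemma without proof, so there is no approach to compare against; your proof via translation-invariance of the Cayley metric together with the coset identity $c+\cC=\cC$ is exactly the natural one.
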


\section{The Construction of Identifying Codes} \label{sec:construccion}

In this section a wide family of identifying codes
over the infinite mesh will be constructed. Then, by making the suitable modular operation by a matrix $M$, these
codes will also be identifying over a graph $\mathcal{G}(M)$.\par

Let $t$ and $d$ be two positive integers. Let us consider the code $$\cC = \langle
\vectwo{t}{t+d}, \vectwo{-(t+d)}{t}\rangle $$ of the infinite mesh. Also, $C =\bigmattwo{t}{-(t+d)}{t+d}{t}$ will denote the matrix associated to the group code.
If $N$ denotes the cardinal of $\Z^2/\cC \Z^2$, it is obtained that
$N=\det(C)=t^2+(t+d)^2$. Note that, as it was proved in \cite{IEEE_TC}, the covering radius of the code $\cC$ is $t+d$.\par

In the next subsections it will be determined the radius $r$ such that
$\mathcal{C}$ is $r$-identifying over $\mathbb{Z}^2$.
Depending on $t$ and $d$ there are three cases:
\begin{enumerate}
\item if $\gcd(2t,d)=1$ then $\cC$ is $r$-identifying for
	$r=t^2+dt+(d^2-1)/2$,
\item if $\gcd(2t,d)=2$ then $\cC$ is $r$-identifying for
	$r=t^2+dt+d^2/2$,
\item if $\gcd(2t,d)\geq 3$ then $\cC$ is not $r$-identifying for any $r$.
\end{enumerate}

As a consequence, the covering radius is quite smaller than the identifying. This will be beneficial for the identification process as
it will be remarked in Section \ref{sec:algoritmo}. Moreover, sufficient
conditions for $\cC$ to be also $r$-identifying over $\mathcal{G}(M)$ will be given. Finally,
the density of the construction will be considered.

\subsection{First Case: $\gcd(2t, d)=1$}

Let $t$ and $d$ be two positive integers such that $\gcd(2t,d)=1$. Let us consider the code $\cC = \langle \vectwo{t}{t+d}, \vectwo{-(t+d)}{t}\rangle$ over $\mathbb{Z}^2$ and let us denote by $N$
the volume of the fundamental parallelogram of $\mathcal{C}$, that is, $N = \det(C)= t^2+(t+d)^2$. The following technical result can be proved:

\begin{lemma}\label{lemma:codewords-impar}If $\vectwo{x}{y}\in\cC$ then for every $m,n\in\Z,\ \vectwo{x+mN}{y+nN}\in\cC$.
\end{lemma}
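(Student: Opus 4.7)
The plan is to reduce the statement to showing that the two axis-aligned vectors $\vectwo{N}{0}$ and $\vectwo{0}{N}$ both lie in $\cC$. Once these lie in $\cC$, since $\cC$ is by definition a subgroup of $\Z^2$ (an integer lattice), it is closed under integer linear combinations, so from $\vectwo{x}{y}\in\cC$ we immediately get
\[
\vectwo{x+mN}{y+nN} = \vectwo{x}{y} + m\vectwo{N}{0} + n\vectwo{0}{N} \in \cC
\]
for any $m,n\in\Z$. Thus the entire content of the lemma is the claim that $N\Z^2 \subseteq \cC$.

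The main step is therefore to exhibit explicit integer coefficients. Looking for $a,b\in\Z$ with $a\vectwo{t}{t+d}+b\vectwo{-(t+d)}{t}=\vectwo{N}{0}$, the second coordinate forces $a(t+d)+bt=0$, and then the first coordinate combined with $N=t^{2}+(t+d)^{2}$ naturally suggests the ansatz $a=t$, $b=-(t+d)$. A direct check gives
\[
t\vectwo{t}{t+d}-(t+d)\vectwo{-(t+d)}{t}=\vectwo{t^{2}+(t+d)^{2}}{0}=\vectwo{N}{0}.
\]
Symmetrically, the choice $a=t+d$, $b=t$ yields $\vectwo{0}{N}$. Both are integer combinations of the generators of $\cC$, so both vectors belong to $\cC$, completing the argument.

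There is essentially no obstacle here; the lemma is really just the observation that the lattice $\cC$ contains the sublattice $N\Z^{2}$, which is the algebraic counterpart of the geometric fact that $N=\det(C)$ is the index $[\Z^{2}:\cC]$ and hence multiplication by $N$ annihilates the quotient $\Z^{2}/\cC\Z^{2}$. The only care needed is to verify the coefficients as written and to note that the hypothesis $\gcd(2t,d)=1$ is not actually used in this step — the lemma is a purely lattice-theoretic fact about $\cC$ that will be invoked throughout the subsequent case analysis to translate arguments by multiples of $N$ along the axes.
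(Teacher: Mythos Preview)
Your proof is correct and takes essentially the same approach as the paper: both arguments exhibit the same explicit integer coefficients showing $\vectwo{N}{0}=C\vectwo{t}{-(t+d)}$ and $\vectwo{0}{N}=C\vectwo{t+d}{t}$ lie in $\cC$, after which closure of the lattice under addition finishes the argument. Your write-up is simply more explicit about the reduction step and correctly notes that the $\gcd$ hypothesis plays no role here.
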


\begin{proof} The result is obtained by

$$\bigmattwo{t}{-(t+d)}{t+d}{t}
	\bigvectwo{t+d}{t}=\bigvectwo{0}{N},$$
and
$$\bigmattwo{t}{-(t+d)}{t+d}{t}
	\bigvectwo{t}{-t-d}=\bigvectwo{N}{0}.$$
\end{proof}

\begin{lemma}\label{lemma:fijas-impar} For all $v\in \mathbb{Z}^{2}$, $B_r(v)$ is $r$-fixed by $\cC = \langle
\vectwo{t}{t+d}, \vectwo{-(t+d)}{t}\rangle $, where $r =t^2+dt+\frac{d^2-1}{2}$
and $\gcd(2t, d)=1$.
\end{lemma}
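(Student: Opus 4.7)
The plan is to exploit the $90^{\circ}$ rotational symmetry of $\mathcal{C}$ together with the arithmetic identity $N = 2r+1$, which rests on $d$ being odd (guaranteed by $\gcd(2t,d)=1$). First I would verify that the map $R:(x,y)\mapsto(-y,x)$ sends $g_1 = (t,t+d)^{T}$ to $g_2$ and $g_2$ to $-g_1$, so $R(\mathcal{C})=\mathcal{C}$, and that it cyclically permutes the four interior sets $\URi,\ULi,\BLi,\BRi$ and the four exterior sets $\URe,\ULe,\BLe,\BRe$ in a compatible way. Consequently $R$ cyclically permutes the four pair-sets
$$
\URi\cup\BLe,\ \ULi\cup\BRe,\ \BLi\cup\URe,\ \BRi\cup\ULe,
$$
so it is enough to prove that $\mathcal{C}\cap(v+(\URi\cup\BLe))\neq\emptyset$ for every $v\in\mathbb{Z}^2$; the three remaining conditions then follow by applying $R$.

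For this single case I would parameterize a codeword as $ag_1+bg_2$, whose coordinate sum is $a(2t+d)-bd$. Because $\gcd(2t+d,d)=\gcd(2t,d)=1$, Bezout's identity produces codewords on every diagonal line $x+y=\ell$, in particular on $x+y=r+v_1+v_2$. A direct computation using this coprimality shows that the codewords with $x+y=0$ are exactly the integer multiples of $(-N,N)$, so the codewords on any such line form an arithmetic progression whose $x$-coordinates differ by $N$. Lemma~\ref{lemma:codewords-impar} gives $(0,N)\in\mathcal{C}$, so subtracting this codeword transports a codeword from the line $x+y=r+v_1+v_2$ to the parallel line $x+y=-(r+1)+v_1+v_2$ without altering its $x$-coordinate. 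Hence the $x$-coordinates of the codewords lying on either of these two lines form a single AP with common difference $N$.

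To conclude, I would unwind the definitions: a codeword on the line $x+y=r+v_1+v_2$ lies in $v+\URi$ iff its $x$-coordinate is in $[v_1,v_1+r]$, while a codeword on the line $x+y=-(r+1)+v_1+v_2$ lies in $v+\BLe$ iff its $x$-coordinate is in $[v_1-r,v_1-1]$. The union of these two ranges is the interval $[v_1-r,v_1+r]$, which contains $2r+1=N$ consecutive integers. Since our AP has common difference $N$, it meets this window in exactly one point, and that point is the $x$-coordinate of the required codeword in $v+(\URi\cup\BLe)$.

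The main obstacle I anticipate is not any deep step but the geometric bookkeeping around the rotation and the two diagonal lines: making sure $R$ really cycles the four pair-sets (and not some other combination), that the two lines are precisely the "adjacent" ones bridged by the codeword $(0,N)$, and that the numerical coincidence $N=2r+1$ is invoked in the right place. Once those are checked the existence of the codeword is forced by the elementary fact that an AP of common difference $N$ hits every window of $N$ consecutive integers exactly once.
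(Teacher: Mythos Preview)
Your proposal is correct and follows essentially the same route as the paper: reduce to the single pair $\URi(v)\cup\BLe(v)$ via the rotation $R$, use $\gcd(2t+d,d)=1$ to place codewords on the diagonal $x+y=r+v_1+v_2$ with $x$-coordinates forming an arithmetic progression of step $N$, and then invoke Lemma~\ref{lemma:codewords-impar} together with $N=2r+1$ to land in the required window. The only cosmetic difference is that the paper shifts by $(N,0)$ and splits into the two cases $0\le x'\le r$ and $r<x'\le 2r$, whereas you shift by $(0,N)$ and phrase the dichotomy as an AP hitting the interval $[v_1-r,\,v_1+r]$; these are equivalent reformulations of the same computation.
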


\begin{proof}
It will proved that for each $v \in \mathbb{Z}^{2}$ there exists a codeword $c$ in the set
$\URi(v)\cup\BLe(v)$. The other three codewords can be obtained by making the
suitable rotations. Note that, if $c\in\URi(v)\cup\BLe(v)$ and
$R=\begin{pmatrix}0&-1\\1&0\\\end{pmatrix}$ then $Rc$ is codeword in
$\ULi(Rv)\cup\BRe(Rv)$, $R^2c$ is codeword in $\BLi(R^2v)\cup\URe(R^2v)$ and
$R^3c$ is codeword in $\BRi(R^3v)\cup\ULe(R^3v)$.\par

Since $c$ must be in $\cC$, there must exist $a,b \in \mathbb{Z}$ such that
$c=\vectwo{at-b(t+d)}{a(t+d)+bt}$. Now, if $v = \vectwo{v_1}{v_2}$, let us
denote $\vectwo{x}{y}=c-v.$ Then, the following system of Diophantine
equations is obtained:
$$
\left\{
\begin{array}{l}
x=at-b(t+d)-v_1\\
y=a(t+d)+bt-v_2
\end{array}\right.
$$

Therefore, $x+y=a(2t+d)-bd-v_1-v_2$. By hypothesis, $\gcd(2t+d,d)=\gcd(2t,d)=1$, which implies that there exist $a_0,b_0 \in \mathbb{Z}$ such that
$a_0(2t+d)-b_0d=r+v_1+v_2.$ Thus, there also exist $x_0,y_0 \in \mathbb{Z}$ such that $x_0+y_0=r$.
The set of solutions is $\{a'=a_0+d\lambda, b'=b_0+(2t+d)\lambda \ | \ \lambda \in \mathbb{Z} \}$.
Therefore, let $x'=a_0t-b_0(t+d)+td\lambda-(t+d)(2t+d)\lambda-v_1
=a_0t-b_0(t+d)-v_1-N\lambda=x_0-\lambda N$ and in the same way let $y'=y_0+\lambda N$.\par

Now, it is set $\lambda$ such that $0\leq x' < N=2r+1$, so $0\leq x'\leq 2r$.
Two different cases are obtained:

\begin{itemize}
\item If $0\leq x'\leq r$ then it is obtained that $0\leq y'\leq r$. Therefore,
	taking $\vectwo{x}{y}=\vectwo{x'}{y'}$ implies that $c=v+\vectwo{x}{y}$ is in $\URi(v)$.
\item Otherwise, it is obtained that $r < x'\leq 2r$ so $-r\leq y' < 0$. Therefore,
	taking $\vectwo{x}{y}=\vectwo{x'-N}{y'}$ (which is a codeword by Lemma \ref{lemma:codewords-impar}) implies that $x+y=r-N=-r-1$ and
	$-r-1 < x\leq -1$, that is $-r\leq x < 0$, thus obtaining that $c=v+\vectwo{x}{y}$ is a codeword in $\BLe(v)$.
\end{itemize}
\end{proof}

\begin{theorem}\label{thm:identifying} Let $t, d$ be positive integers such that $\gcd(2t,d)=1$ and $r=t^2+dt+\frac{d^2-1}{2}$. Then, $\cC = \langle \vectwo{t}{t+d}, \vectwo{-(t+d)}{t}\rangle  $ is a
$r$-identifying code in $\mathbb{Z}^2$.
\end{theorem}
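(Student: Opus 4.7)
The plan is to verify the two defining conditions of an $r$-identifying code---coverage (every $K_r(v)$ is nonempty) and separation (distinct vertices have distinct $K_r$)---building on Lemma \ref{lemma:fijas-impar} as the main tool.

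Coverage is immediate. The covering radius of $\cC$ is $t+d$, as recalled at the start of Section \ref{sec:construccion}, and an elementary comparison shows $r=t^2+dt+(d^2-1)/2\geq t+d$ for all positive integers $t,d$, so every vertex lies within distance $r$ of some codeword.

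For separation, I would argue by contradiction: assume $u\neq v$ with $K_r(u)=K_r(v)$, and set $w=v-u\neq 0$. Using the translation invariance of the subgroup code (the last lemma of Section \ref{sec:identifyingdefinitions}), the set of ``bad pairs'' is closed under simultaneous shifts by elements of $\cC$, so I may normalize $u$ to lie in a fundamental domain of $\cC$. Then I apply Lemma \ref{lemma:fijas-impar} to $B_r(u)$, producing four codewords, one in each of $\URi(u)\cup\BLe(u)$, $\ULi(u)\cup\BRe(u)$, $\BRi(u)\cup\ULe(u)$, $\BLi(u)\cup\URe(u)$. Each such codeword lies at distance exactly $r$ from $u$ (if it sits in a ``$\mathcal{\cdot}^-_r$'' set) or exactly $r+1$ from $u$ (if in a ``$\mathcal{\cdot}^+_r$'' set). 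The hypothesis $K_r(u)=K_r(v)$ forces each of these codewords to be either in both balls or in neither, i.e.\ at distance $\leq r$ from $v$ in the first case and $>r$ from $v$ in the second. Converting these ``both or neither'' conditions into Diophantine constraints on $w$ via the explicit parametrisation $c=a\vectwo{t}{t+d}+b\vectwo{-(t+d)}{t}$ used in the proof of Lemma \ref{lemma:fijas-impar}, and exploiting $\gcd(2t,d)=1$ together with the identity $N=2r+1$, I expect to squeeze $w$ to $0$ and reach the desired contradiction.

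The main obstacle, as the remark following the definition of $r$-fixing explicitly warns (and as Figure \ref{fig:fixednotidentifying} illustrates), is that $r$-fixing is strictly weaker than $r$-identifying, so Lemma \ref{lemma:fijas-impar} cannot be used as a black box. The delicate point is a refinement that pinpoints \emph{where} inside each set $\mathcal{\cdot}^-_r\cup\mathcal{\cdot}^+_r$ the forced codeword actually lies: the Diophantine reduction in Lemma \ref{lemma:fijas-impar} shows that the single solution parameter $\lambda$ is determined modulo $N=2r+1$, and the precise value of the residue is governed by $v_1+v_2\pmod{d}$. Tracking how this residue changes between $u$ and $v=u+w$, and showing that the four resulting ``both-or-neither'' constraints are mutually inconsistent for $w\neq 0$, is the step I expect to be the technical heart of the proof.
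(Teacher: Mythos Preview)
Your overall framework is right---use Lemma \ref{lemma:fijas-impar} for coverage, then argue separation---and you correctly flag that $r$-fixing alone does not give identification. But the plan you outline for the ``technical heart'' is not the mechanism the paper uses, and as written it is not yet a proof.

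The paper does not try to squeeze $w$ to zero from four simultaneous residue constraints. Instead it first reduces, via the $90^\circ$ rotational symmetry of $\cC$, to the directional case $w=\vectwo{x}{y}$ with $x>y\geq 0$ (the case $|x|=|y|$ being immediate from fixing). Only \emph{two} of the four fixing sets are then used: $\ULi(v)\cup\BRe(v)$ and $\BLi(v)\cup\URe(v)$. If the guaranteed codeword lies in the inner part ($\ULi$ or $\BLi$) it already separates $v$ from $w$. The key constructive step---which your proposal does not contain---is what to do when the codeword $c_1\in\BRe(v)$ (respectively $c_2\in\URe(v)$) lies outside both balls: one \emph{translates} it by the specific lattice vector $\vectwo{-(t+d)}{t}$ (respectively by $\vectwo{-2t-d}{-d}$) to produce a new codeword that lands in $K_r(v)\setminus K_r(w)$, provided the parameter $a$ of $c_1$ satisfies $a>\frac{d-1}{2}$ (respectively $b>t$). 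This translation trick is the engine of the proof.

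That leaves a residual case where both parameters are small, $1\leq a\leq\frac{d-1}{2}$ and $1\leq b\leq t$. The paper disposes of it not by residue tracking but by a fundamental-domain argument: the difference $c_1-c_2$ is forced into a small rectangle contained in a set of coset representatives of $\Z^2/\cC\Z^2$ that avoids $\vectwo{0}{-N}$, so $c_1-c_2$ cannot be a codeword, a contradiction. Your ``residue governed by $v_1+v_2\pmod d$'' idea may be related, but you have not shown how to close the argument from it, and the paper's route is more direct.
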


\begin{proof} By the previous lemma, all $r$-balls are $r$-fixed, and hence, $\cC$
$r$-dominates $\mathbb{Z}^2$. However, this is not enough for
being an $r$-identifying code. Therefore, it will be next proved that any $v\in
\mathbb{Z}^{2}$ is $r$-separated from $w=v+\vectwo{x}{y}$, for any
$\vectwo{x}{y} \in \mathbb{Z}^{2}$. Note that, if $|x|=|y|$ this is
straightforward since the balls are $r$-fixed.
Then, as each 90-degree rotation leaves $\cC$ invariant, it can be assumed without loss of generality that $x>y\geq 0$.\par

If there is a codeword in $\ULi(v)$ then it is straightforward that they are $r$-separated. Otherwise, there is a codeword $c_1\in \cC\cap\BRe(v)$ which is not in $K_r(v)$. If $c_1$ is in $K_r(w)$ the proof has finished. As $c_1\in \BRe(v)$ it can be expressed as $c_1=v+\vectwo{a}{a-r-1}$ with $1\leq a\leq r$.
Then,
	$$c_1+\vectwo{-(t+d)}{t}=v+\vectwo{a-(t+d)}{a+t-r-1}$$
	is a codeword which is further from $K_r(w)$ than $c_1$.
And when $a>\frac{d-1}{2}$, it is in $K_r(v)$.

If the same than in the previous case is done, but for $\BLi(v)\cup\URe(v)$, it is obtained that $c_2=v+\vectwo{b}{r-b+1}\in\URe(v)$.
Then,
	$$c_2+\vectwo{-(t+d)}{t}-\vectwo{t}{t+d}=c_2+\vectwo{-2t-d}{-d}=v+\vectwo{b-2t-d}{r-b-d+1}$$
which is further from $K_r(w)$ than $c_2$
and when $b>t$ it is in $K_r(v)$.\par

To finish, it has to be proved that the case when both codewords $c_1,c_2$ are in the left ranges never happens.
For that, let us assume that $1\leq a\leq\frac{d-1}{2}$ and $1\leq b\leq t$, with $c_1,c_2\in\cC$.
Thus, $c=c_1-c_2=\vectwo{a-b}{a+b-1-N}=\vectwo{x'}{y'}$ is in the rectangle defined by:

$$
	\{1-t\leq x'< \frac{d-1}{2}-1 < d,\ 1-N\leq y'\leq t+\frac{d-1}{2}-1-N < t+d-N\}
$$
which does not include the point $\vectwo{0}{-N}$.
This rectangle is a subset of
\begin{multline*}
	\{1-2t\leq x'\leq -t, -N\leq y'\leq -N+t-1\} \cup\\
	\{ 1-t\leq x'\leq d, -N\leq y'\leq -N+t+d-1\}
\end{multline*}
which is  a set of representatives of $\Z^2/\cC \Z^2$ (see \cite{Jordan})
and contains $\vectwo{0}{-N}$. As a consequence, all its elements are different.
Therefore, $c$ is not congruent with $\vectwo{0}{0}$ and thus it is not a codeword, which is a contradiction.\end{proof}

Once a family of identifying codes over $\mathbb{Z}^2$ has been constructed, new conditions have to be added in order to get an identifying code also over $\mathcal{G}(M)$. The most obvious condition is that the code must be periodic. Then, in the case of these lattice codes $M$ has to be a multiple of the matrix $C$ associated to the lattice code. With this it is obtined that the code is locally the same than in the plane, thus keeping its density. The second condition is that any ball cannot overlap with itself. These conditions are stated in the following result.

\begin{corollary}\label{col:Lidentifying} Let $t, d$ be two positive integers and  $M \in \mathcal{M}_{2 \times 2}(\mathbb{Z})$. Let us consider the graph $\mathcal{G}(M)$ and its code $\mathcal{C} = \langle \vectwo{t}{t+d}, \vectwo{-(t+d)}{t}\rangle $. If there exists $Q \in \mathcal{M}_{2 \times 2}(\mathbb{Z})$ such that $M = CQ$ and $\min \{ d(0, M\gamma) \ | \ 0 \neq \gamma \in \mathbb{Z}^2 \} > 2r+1$ then $\cC$ is a $r$-identifying code in $\mathcal{G}(M)$ for $r =t^2+dt+\frac{d^2-1}{2}$.
\end{corollary}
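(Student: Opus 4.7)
The plan is to lift the question from $\mathcal G(M)$ to $\mathbb Z^2$ and reduce to Theorem \ref{thm:identifying}. The two hypotheses play distinct roles. The factorization $M=CQ$ forces $M\mathbb Z^2\subseteq C\mathbb Z^2=\cC$, so $\cC$ descends to a well-defined subgroup of the quotient and its projection is exactly the set of codewords of $\mathcal G(M)$. The bound $\min\{d(0,M\gamma):\gamma\neq 0\}>2r+1$ will guarantee that the canonical projection $\pi:\mathbb Z^2\to\mathbb Z^2/M\mathbb Z^2$ is injective on every $L^1$-ball of radius $r$: two such points coinciding mod $M$ would differ by a nonzero element of $M\mathbb Z^2$ of $L^1$-norm at most $2r$, contradicting the hypothesis. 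Since any walk of length at most $r$ in $\mathcal G(M)$ lifts uniquely to $\mathbb Z^2$, $\pi$ restricts to an isometry $B_r(\tilde v)\to B_r(v)$ and thus to a bijection $K_r(\tilde v)\to K_r(v)$ for any lift $\tilde v$ of $v$.

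With this machinery in place, coverage is immediate: Theorem \ref{thm:identifying} gives $K_r(\tilde v)\neq\emptyset$, hence $K_r(v)\neq\emptyset$. The separation part is where the real work lies. I would argue by contradiction, assuming $v\neq w$ in $\mathcal G(M)$ satisfy $K_r(v)=K_r(w)$. Because any common codeword forces $d_M(v,w)\leq 2r$, one can choose lifts with $\tilde v\neq\tilde w$ and $|\tilde v-\tilde w|_1\leq 2r$. For each $\tilde c\in K_r(\tilde v)$, its image lies in $K_r(w)$, so there is a unique lift $\tilde c'\in K_r(\tilde w)$, with $\tilde c'=\tilde c+M\gamma$ for some $\gamma\in\mathbb Z^2$. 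The target is to show $\gamma=0$ in every such instance, yielding $K_r(\tilde v)=K_r(\tilde w)$ in $\mathbb Z^2$ and contradicting Theorem \ref{thm:identifying}.

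The main obstacle is precisely this wrap-around control: a crude triangle inequality only gives $|M\gamma|_1\leq |\tilde c-\tilde v|_1+|\tilde v-\tilde w|_1+|\tilde w-\tilde c'|_1\leq 4r$, which does not directly collide with $|M\gamma|_1>2r+1$. Closing this gap will require a sharper geometric argument that exploits the explicit octant decomposition used in Lemma \ref{lemma:fijas-impar} and Theorem \ref{thm:identifying}, together with the $\cC$-equivariance from the subgroup lemma at the end of Section \ref{sec:identifyingdefinitions}. The idea is that if $\gamma\neq 0$ for some $\tilde c$, the translation-invariance of $K_r$ under $\cC$ lets one relocate the configuration so that a second pair of $\mathbb Z^2$-points with equal $K_r$-sets appears, already ruled out in the infinite-mesh theorem. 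Making this rigorous, by tracking the octant to which $\tilde c'$ belongs relative to $\tilde w$ and comparing with the octant of $\tilde c$ relative to $\tilde v$, is the technical heart of the corollary.
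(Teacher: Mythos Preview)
The paper does not give a proof of this corollary; it states it immediately after a one-paragraph heuristic (``the code must be periodic'' and ``any ball cannot overlap with itself'') and moves on. Your proposal fleshes out exactly that heuristic, and the setup is correct: the factorization $M=CQ$ indeed gives $M\mathbb{Z}^2\subseteq\cC$, so $\cC$ is well defined modulo $M$; and the girth bound makes $\pi$ injective on each $L^1$-ball of radius $r$, hence $K_r(v)=\pi\bigl(K_r(\tilde v)\bigr)$ bijectively. Coverage then follows from Theorem~\ref{thm:identifying} as you say.

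Where you go beyond the paper is in noticing that separation is \emph{not} automatic from these two ingredients: a separating codeword $\tilde c\in K_r(\tilde v)\setminus K_r(\tilde w)$ could in principle wrap around to some $\tilde c'=\tilde c+M\gamma\in K_r(\tilde w)$, and the crude triangle inequality only bounds $|M\gamma|_1$ by $4r$, not by $2r+1$. That observation is correct and is a point the paper's informal justification glosses over. However, your proposed remedy is not an argument yet. Appealing to the octant decomposition of Lemma~\ref{lemma:fijas-impar} and Theorem~\ref{thm:identifying}, or to the $\cC$-equivariance of $K_r$, does not by itself force $\gamma=0$: translating by an element of $\cC$ moves $\tilde v,\tilde w,\tilde c,\tilde c'$ rigidly and cannot turn a wrap-around into a planar contradiction, and the octant bookkeeping in Theorem~\ref{thm:identifying} gives no control on where $\tilde c'$ sits relative to $\tilde w$. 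As written, this part is a hope rather than a proof.

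A direct way to close the gap is to observe that what you really need is injectivity of $\pi$ on $B_r(\tilde v)\cup B_r(\tilde w)$ whenever $|\tilde v-\tilde w|_1\le 2r$; this set has $L^1$-diameter at most $4r$, so the argument goes through cleanly under the stronger hypothesis $\min\{d(0,M\gamma):\gamma\neq0\}>4r$. With only the paper's stated bound $>2r+1$ the lifting argument as you (and the paper) outline it does not obviously suffice, and neither the paper nor your proposal supplies the missing step.
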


\begin{remark} Note that $\min \{ d(0, M\gamma) \ | \ 0 \neq \gamma \in \mathbb{Z}^2 \}$ is the length of the shortest non-trivial cycle in $\mathcal{G}(M)$. \end{remark}

In \cite{ISIT10} it was shown that, under some conditions for matrix $M$, $\cC=\allowbreak\langle\vectwo{t}{t+1},\allowbreak\vectwo{-t-1}{t}\rangle$ is a perfect $t$-error correcting code over the
graph $\mathcal{G}(M)$. Moreover, this is the only perfect code that can be constructed over that graph, up to symmetries. Next, this
result about perfect codes and the previous one obtained in this section dealing with identifying codes are summarized in the following corollary.

\begin{corollary}\label{col:perfecto} Let $t$ be a positive integer and $M \in \mathcal{M}_{2 \times 2}(\mathbb{Z})$. Let us consider the graph $\mathcal{G}(M)$ and its code $\cC = \langle \vectwo{t}{t+1}, \vectwo{-(t+1)}{t}\rangle $. If there exists $Q \in \mathcal{M}_{2 \times 2}(\mathbb{Z})$ such that $M = CQ$ and $\min \{ d(0, M\gamma) \ | \ 0 \neq \gamma \in \mathbb{Z}^2 \} > 2t(t+1)+1$ then $\cC$ is a $t(t+1)$-identifying and $t$-perfect code in $\mathcal{G}(M)$.
\end{corollary}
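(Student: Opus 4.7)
The plan is to derive this corollary by combining two already-available results: the identifying statement of Corollary \ref{col:Lidentifying} (specialized to a particular choice of parameters) with the known perfectness result of \cite{ISIT10}. No substantially new argument is required; the main task is to verify that the hypotheses of each of those results are met under the stated assumptions.

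First I would handle the identifying claim by setting $d=1$ in Corollary \ref{col:Lidentifying}. Since $\gcd(2t,1)=1$, the hypothesis $\gcd(2t,d)=1$ of that corollary is satisfied, and the resulting identifying radius is
\[
r \;=\; t^{2} + dt + \frac{d^{2}-1}{2} \;=\; t^{2} + t \;=\; t(t+1).
\]
The threshold condition in Corollary \ref{col:Lidentifying} then becomes $\min\{d(0,M\gamma)\mid 0\neq\gamma\in\Z^{2}\} > 2r+1 = 2t(t+1)+1$, which is precisely the hypothesis we have assumed. The matrix $C$ associated with the lattice code $\langle\vectwo{t}{t+1},\vectwo{-(t+1)}{t}\rangle$ is the same in both statements, so the factorization condition $M=CQ$ transfers verbatim. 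Hence $\cC$ is $t(t+1)$-identifying in $\mathcal{G}(M)$.

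For the $t$-perfect part, I would simply invoke the construction of \cite{ISIT10}: under the same divisibility hypothesis $M=CQ$ (which ensures that $\cC$ descends to a subgroup code of $\mathbb{Z}^{2}/M\mathbb{Z}^{2}$) and the cycle-length hypothesis (which ensures the $t$-balls do not wrap around and self-overlap, so that being perfect in $\mathbb{Z}^{2}$ transfers to $\mathcal{G}(M)$), the code $\langle\vectwo{t}{t+1},\vectwo{-(t+1)}{t}\rangle$ is perfect $t$-error-correcting in $\mathcal{G}(M)$. Note that the required girth condition for perfectness is $>2t+1$, which is weaker than the identifying threshold $>2t(t+1)+1$ assumed here, so the hypothesis for \cite{ISIT10} is a fortiori satisfied.

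If there is any obstacle, it is a purely bookkeeping one: checking that the code we write with $d=1$ in Corollary \ref{col:Lidentifying} agrees literally with the perfect code of \cite{ISIT10}, and that the single girth hypothesis stated here is strong enough to imply both ``locally the same as the plane'' and ``no self-overlap of $r$-balls'' for the larger radius $r=t(t+1)$. Both are immediate once one notices that $2t(t+1)+1$ dominates $2t+1$, so the perfectness transfer works trivially and the identifying transfer is exactly the content of Corollary \ref{col:Lidentifying}.
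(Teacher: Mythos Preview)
Your proposal is correct and matches the paper's approach exactly: the paper presents this corollary without a separate proof, simply as the combination of Corollary~\ref{col:Lidentifying} specialized to $d=1$ (yielding $r=t(t+1)$ and the girth condition $>2t(t+1)+1$) with the $t$-perfect code result from \cite{ISIT10}. Your bookkeeping checks---that $\gcd(2t,1)=1$, that the girth threshold for identification dominates the one for perfectness, and that the matrix $C$ and factorization $M=CQ$ are literally the same in both statements---are precisely the verifications the paper leaves implicit.
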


\begin{example} Let us consider $t=1$ and the matrix $M = \begin{pmatrix}25&0\\0&25\end{pmatrix}.$ Then, the code $\cC = \langle \vectwo{1}{2}, \vectwo{-2}{1} \rangle  $ is 2-identifying in $\mathcal{G}(M)$ which, as it can be seen in Figure \ref{fig:2identifT}, is a square torus graph of side 25.

\begin{figure}
    \begin{center}
    \includegraphics[width=.6\columnwidth]{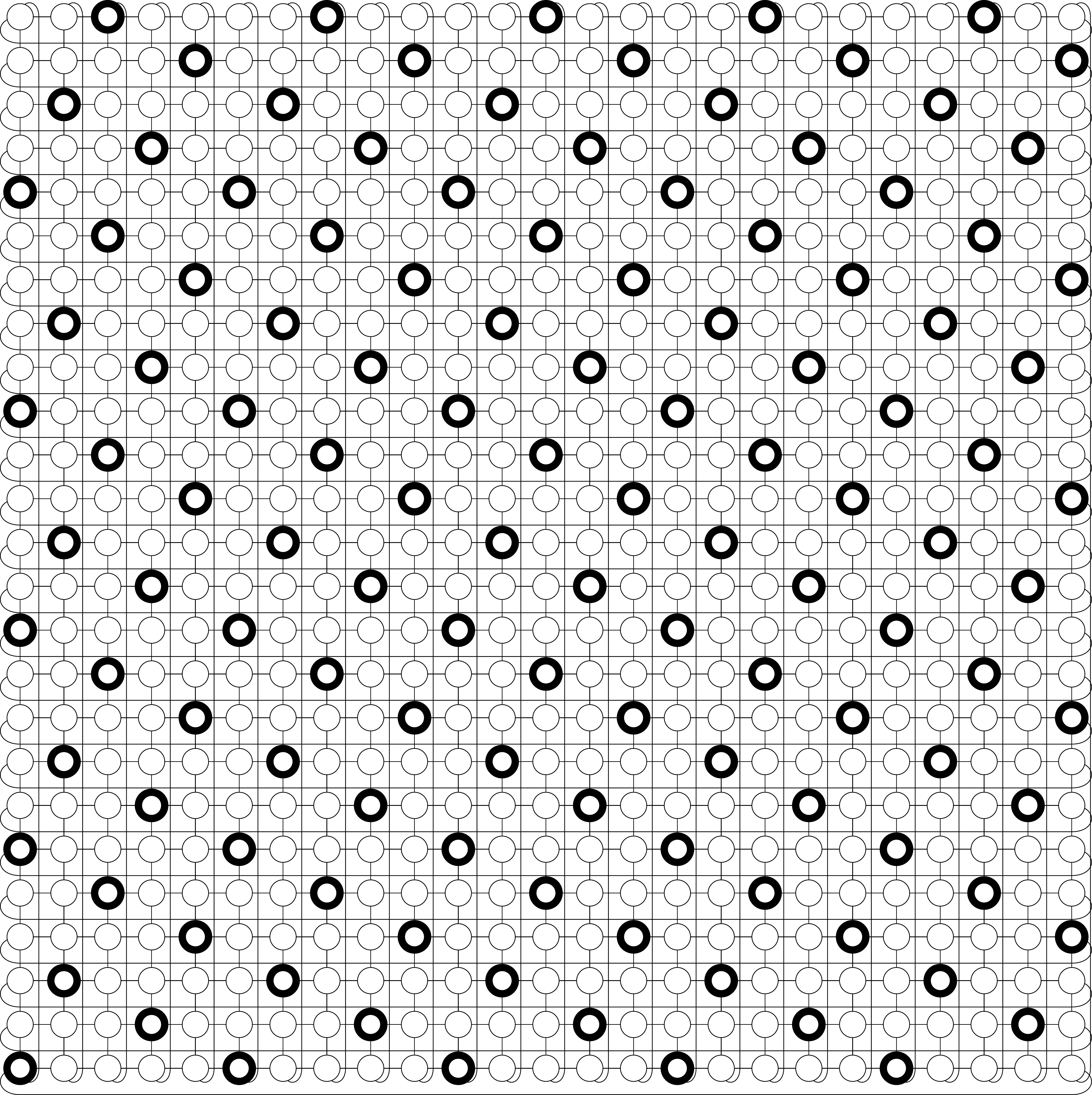}
	\end{center}
    \caption{
		1-Perfect and 2-identifying over the torus graph.
		}
    \label{fig:2identifT}
\end{figure}

\end{example}

\begin{example} Let us consider the Kronecker graph product of cycles of lengths 20 and 5. In \cite{Zerovnik}, it was shown that the code $\cC = \langle \vectwo{3}{1}, \vectwo{-1}{3} \rangle $ is a 1-perfect code over this graph. It can be seen that this graph is $\mathcal{G}(M; A)$, with matrix $M = \begin{pmatrix}20&0\\0&5\end{pmatrix}$ and set of adjacencies $A = \{ \vectwo{1}{-1}, \vectwo{-1}{1} \}$. As it was proved in \cite{ISIT10}, this graph is isomorphic to $\mathcal{G}(M')$, where $M' = \begin{pmatrix}10&-5\\10&5\end{pmatrix}$. Now, using the previous method it is obtained that the code $\cC = \langle \vectwo{1}{2}, \vectwo{-2}{1} \rangle $ is 1-perfect, but also 2-identifying in $\mathcal{G}(M)$. Note that the codes are the same by the graph isomorphism, but our method allows us to determine also the identifying radius of the code. In Figure \ref{fig:perfectKron}, the Kronecker product and the code are represented. In Figure \ref{fig:2identK} its isomorphic Cayley graph and the transformed code are shown.

\begin{figure}
\begin{center}
\begin{tikzpicture}[
	base/.style={circle,draw,inner sep=1ex},
	normal/.style={},
	code/.style={ultra thick},
	x=4ex,y=4ex,
	]
\pgfmathsetseed{1}
\foreach\x in {0,...,19}
\foreach\y in {0,...,4}
{
\expandafter\gdef\csname estilo\x_\y\endcsname{normal}
\expandafter\gdef\csname index\x_\y\endcsname{-1}
}

\foreach\i in {0,...,19}
{
\pgfmathparse{int(mod(3*\i,20))}\let\x=\pgfmathresult
\pgfmathparse{int(mod(\i,5))}\let\y=\pgfmathresult
\expandafter\xdef\csname estilo\x_\y\endcsname{code}
\expandafter\xdef\csname index\x_\y\endcsname{\i}
\draw[thick] (\x,\y) +(45:1.4) -- +(45:-1.4) +(135:1.4) -- +(135:-1.4);

\pgfmathparse{int(mod(20+\x+1,20)}\let\px=\pgfmathresult
\pgfmathparse{int(mod(5+\y+1,5)}\let\py=\pgfmathresult
\expandafter\xdef\csname index\px_\py\endcsname{\i}
\draw[thick] (\px,\py) -- +(45:-1.4);
\pgfmathparse{int(mod(20+\x-1,20)}\let\px=\pgfmathresult
\pgfmathparse{int(mod(5+\y+1,5)}\let\py=\pgfmathresult
\expandafter\xdef\csname index\px_\py\endcsname{\i}
\draw[thick] (\px,\py) -- +(135:-1.4);
\pgfmathparse{int(mod(20+\x+1,20)}\let\px=\pgfmathresult
\pgfmathparse{int(mod(5+\y-1,5)}\let\py=\pgfmathresult
\expandafter\xdef\csname index\px_\py\endcsname{\i}
\draw[thick] (\px,\py) -- +(135:1.4);
\pgfmathparse{int(mod(20+\x-1,20)}\let\px=\pgfmathresult
\pgfmathparse{int(mod(5+\y-1,5)}\let\py=\pgfmathresult
\expandafter\xdef\csname index\px_\py\endcsname{\i}
\draw[thick] (\px,\py) -- +(45:1.4);
}

\foreach\x in {0,...,19}
\foreach\y in {0,...,4}
{
\edef\index{\csname index\x_\y\endcsname}
\node[base,\csname estilo\x_\y\endcsname,fill=white] at (\x,\y) {};
}
\end{tikzpicture}
 \caption{1-perfect code over Kronecker product of cycles.}
 \label{fig:perfectKron}
\end{center}
\end{figure}
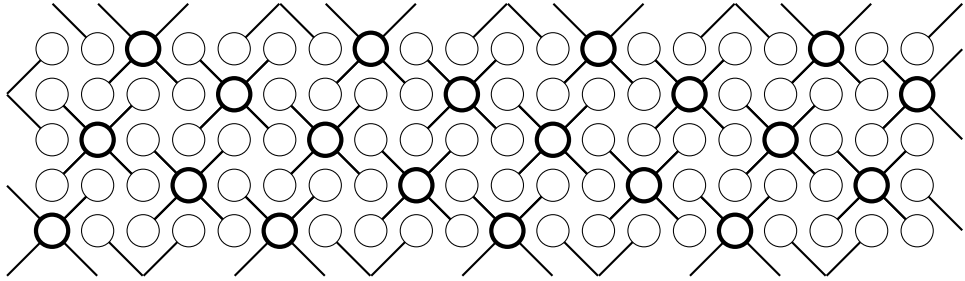

\begin{figure}
    \begin{center}
    \includegraphics[width=.6\columnwidth]{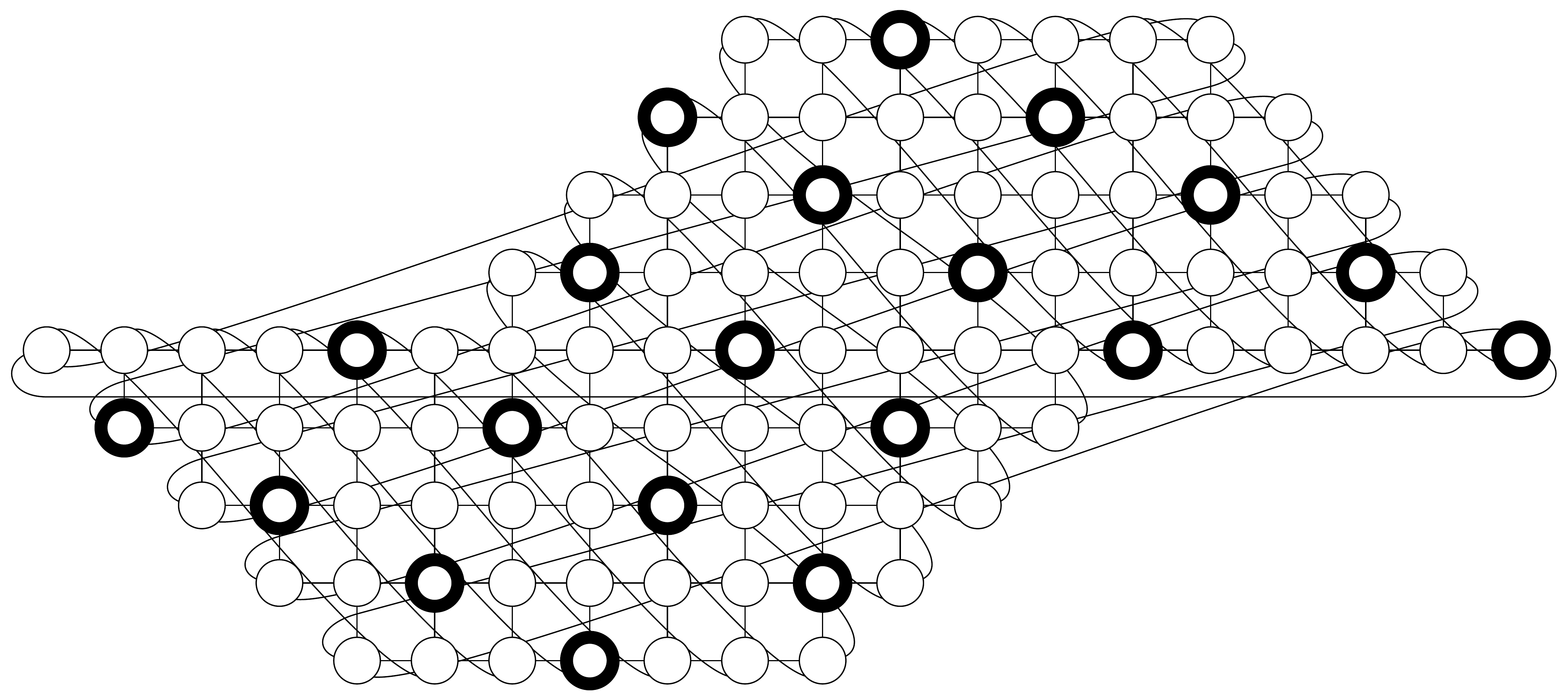}
	\end{center}
    \caption{
		2-identifying and 1-perfect code over $\mathcal{G}(M)$.
		}
    \label{fig:2identK}
\end{figure}

\end{example}

\subsection{Second Case:  $\gcd(2t, d)=2$}

Let $t$ and $d$ be two positive integers such that $\gcd(2t, d)=2$. Again, let us consider the code
 $\cC = \langle \vectwo{t}{t+d}, \vectwo{-(t+d)}{t}\rangle$ over $\mathbb{Z}^2$ and let us denote by $N=\det(\begin{pmatrix}t&-(t+d)\\t+d&t\end{pmatrix})=2t^2+2td+d^2$ the volume of its fundamental parallelogram. It can be proved that:

\begin{lemma}\label{lemma:codewords-par}$\vectwo{N/2}{N/2}$ and $\vectwo{N/2}{-N/2}$ are codewords in $\cC$.
\end{lemma}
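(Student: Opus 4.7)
The plan is to exhibit explicit integer coefficients that realize each of the two vectors as a $\Z$-linear combination of the generators of $\cC$. The only arithmetic content is the observation that $\gcd(2t,d)=2$ forces $d$ to be even, so $d/2 \in \Z$; once this is in hand the verification is pure computation.

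For the first vector I would propose the combination
\[
\left(t+\tfrac{d}{2}\right)\vectwo{t}{t+d} - \tfrac{d}{2}\vectwo{-(t+d)}{t},
\]
noting that both coefficients are integers because $d$ is even, and then verifying by direct expansion that each coordinate equals $t^2+td+d^2/2 = N/2$. For the second vector the analogous guess
\[
-\tfrac{d}{2}\vectwo{t}{t+d} - \left(t+\tfrac{d}{2}\right)\vectwo{-(t+d)}{t}
\]
should yield $\vectwo{N/2}{-N/2}$ by the same kind of one-line check.

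An equivalent (perhaps slicker) route is to recall that $v \in \cC$ iff $C^{-1}v \in \Z^2$, where $C^{-1}=\frac{1}{N}\bigmattwo{t}{t+d}{-(t+d)}{t}$. Then $C^{-1}\vectwo{N/2}{N/2}=\vectwo{(2t+d)/2}{-d/2}$ and $C^{-1}\vectwo{N/2}{-N/2}=\vectwo{-d/2}{-(2t+d)/2}$, both of which have integer entries precisely when $d$ is even, i.e.\ exactly under the standing hypothesis $\gcd(2t,d)=2$.

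There is no real obstacle here; the role of the lemma is merely to isolate the two ``diagonal'' codewords $\vectwo{N/2}{\pm N/2}$ for use in the second case, in the same spirit as Lemma~\ref{lemma:codewords-impar} supplied the translation identities $\vectwo{N}{0},\vectwo{0}{N}\in\cC$ in the first case.
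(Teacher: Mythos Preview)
Your proof is correct and is essentially the same as the paper's: both exhibit explicit integer coefficients (namely $\vectwo{t+d/2}{-d/2}$ and $\vectwo{-d/2}{-(t+d/2)}$, using that $d$ is even) realizing the two vectors as elements of $\cC$. Your $C^{-1}$ reformulation recovers exactly these same coefficients, so it is a cosmetic repackaging rather than a different argument.
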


\begin{proof} The proof is just to check that the following equalities hold:
$$\vectwo{N/2}{N/2}=\begin{pmatrix}t&-(t+d)\\t+d&t\end{pmatrix}\vectwo{t+d/2}{-d/2},$$
$$\vectwo{N/2}{-N/2}=\begin{pmatrix}t&-(t+d)\\t+d&t\end{pmatrix}\vectwo{-d/2}{t+d/2}.$$
\end{proof}

\begin{lemma}\label{lemma:2fijas} For all $v\in \mathbb{Z}^2$, $B_r(v)$ is $r$-fixed by $\cC$,
where $r=t^2+dt+\frac{d^2}{2}$ and $\gcd(2t, d)=2$.
\end{lemma}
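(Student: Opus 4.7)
The plan is to follow the template of Lemma~\ref{lemma:fijas-impar}: exhibit, for every $v\in\mathbb{Z}^2$, a codeword in $\URi(v)\cup\BLe(v)$, and then invoke the rotation $R=\begin{pmatrix}0 & -1\\1 & 0\end{pmatrix}$, which sends $(t,t+d)^T$ to $(-(t+d),t)^T$ and $(-(t+d),t)^T$ to $-(t,t+d)^T$, and hence leaves $\cC$ invariant. Applied to $R^{-k}v$ for $k=1,2,3$, this symmetry routinely yields the codewords required in $\ULi(v)\cup\BRe(v)$, $\BLi(v)\cup\URe(v)$, and $\BRi(v)\cup\ULe(v)$, exactly as in the odd case.

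Writing an arbitrary codeword as $c=\vectwo{at-b(t+d)}{a(t+d)+bt}$ and setting $\vectwo{x}{y}=c-v$, one obtains
$$x+y=a(2t+d)-bd-v_1-v_2.$$
Since $\gcd(2t+d,d)=\gcd(2t,d)=2$, the set $\{a(2t+d)-bd:a,b\in\mathbb{Z}\}$ is exactly $2\mathbb{Z}$, not all of $\mathbb{Z}$ as in the coprime case. Thus one cannot simultaneously aim for $x+y=r$ and $x+y=-(r+1)$; a case split on the parity of $r+v_1+v_2$ is forced, which is the key new feature compared with Lemma~\ref{lemma:fijas-impar}.

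If $r+v_1+v_2$ is even, Bezout supplies $(a_0,b_0)$ with $a_0(2t+d)-b_0d=r+v_1+v_2$, giving an $(x_0,y_0)$ with $x_0+y_0=r$. The general solution $a=a_0+\lambda d/2$, $b=b_0+\lambda(2t+d)/2$ translates $(x,y)$ by $\vectwo{-N/2}{N/2}$, which is a codeword by Lemma~\ref{lemma:codewords-par}. A unique $\lambda$ places $x$ in $[0,N/2-1]=[0,r-1]$, making $y=r-x\in[1,r]$, so $(x,y)\in\URi$. If instead $r+v_1+v_2$ is odd, then $-(r+1)+v_1+v_2$ is even; the same parametrization now applied to $a(2t+d)-bd=-(r+1)+v_1+v_2$ admits a unique $\lambda$ with $x\in[-r,-1]$, whence $y=-(r+1)-x\in[-r,-1]$ and $(x,y)\in\BLe$.

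The main subtlety is this parity bookkeeping. The shift vector $\vectwo{-N/2}{N/2}$ has half the length of its counterpart $\vectwo{-N}{0}$ from the odd case, so the window of $x$-values reachable by sliding $\lambda$ is cut in half. However, the two parity classes of $r+v_1+v_2$ correspond precisely to the two target sectors $\URi$ and $\BLe$, fitting together with no gap and no overlap; this is what makes the pair $\URi\cup\BLe$ the correct one to pursue and closes the argument.
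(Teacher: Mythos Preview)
Your proof is correct. The core idea---splitting on the parity of $r+v_1+v_2$ because $\gcd(2t+d,d)=2$ forces $x+y\in 2\mathbb Z-(v_1+v_2)$---matches the paper's. The difference is only in how the odd-parity case is routed: you aim for $x+y=-(r+1)$ and land in $\BLe(v)$, while the paper aims for $x+y=r+1$ and lands in $\URe(v)$; the sliding window of width $N/2=r$ is identical in both. Either choice establishes $r$-fixedness. The paper's route has one advantage worth noting: it makes Remark~\ref{remark:strong-fixed} immediate, namely that for each $v$ the four codewords obtained by rotation all lie in the inner corners $\URi,\ULi,\BLi,\BRi$ (even case) or all in the outer corners $\URe,\ULe,\BLe,\BRe$ (odd case). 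Your argument in fact gives the same conclusion---rotation by $R$ preserves the parity of $v_1+v_2$, so your even case yields all four inner corners and your odd case all four outer corners---but you do not record it. Since this stronger form is actually used in the proof of Theorem~\ref{thm:identifying-par}, it would be worth stating explicitly.
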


\begin{proof} Analogously to the proof of Lemma \ref{lemma:fijas-impar} the following system of Diophantine equations is obtained:
$$
\left\{
\begin{array}{l}
x=at-b(t+d)-v_1\\
y=a(t+d)+bt-v_2
\end{array}\right.
$$

Therefore, $x+y=a(2t+d)-bd-v_1-v_2$. By hypothesis, $\gcd(2t+d,d)=\gcd(2t,d)=2$,
which implies that there exist $a_0,b_0 \in \mathbb{Z}$ such that
$a_0(2t+d)-b_0d=r+v_1+v_2$ or $a_0(2t+d)-b_0d=r+1+v_1+v_2$.
Thus, there also exist $x_0,y_0 \in \mathbb{Z}$ such that $x_0+y_0=r$ or $x_0+y_0=r+1$.
The set of solutions is $\{a'=a_0+\frac{d}{2}\lambda, b'=b_0+(t+\frac{d}{2})\lambda \ | \ \lambda \in \mathbb{Z} \}$.
Therefore, let $x'=a_0t-b_0(t+d)+t\frac{d}{2}\lambda-(t+\frac{d}{2})(t+d)\lambda-v_1
=x_0-\lambda \frac{N}{2}$ and in the same way let $y'=y_0+\lambda \frac{N}{2}$.\par

Now, let us set $\lambda$ such that $0< x' \leq \frac{N}{2}=r$.
Hence, $y'<x'+y'\leq r+y'$.
If $x'+y'=r$ then $0\leq y'<r$ and $c$ is at $\URi(v)$,
otherwise $x'+y'=r+1$ with $0<y'<r+1$ and $c$ is at $\URe(v)$.
\end{proof}

\begin{remark} \label{remark:strong-fixed} Note that it has been proved something stronger than being $r$-fixed, that is, every ball has a codeword in each of the four sets
$\ULi,\URi,\BLi,\BRi$ or in each of the four sets $\ULe,\URe,\BLe,\BRe$.
\end{remark}

\begin{theorem}\label{thm:identifying-par} Let $t, d$ be positive integers such that $\gcd(2t,d)=2$. Then, $\cC = \langle \vectwo{t}{t+d},\vectwo{-(t+d)}{t}\rangle$ is an $r$-identifying code in $\mathbb{Z}^2$ for $r=t^2+dt+\frac{d^2}{2}$.
\end{theorem}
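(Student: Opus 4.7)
I plan to mirror the proof of Theorem~\ref{thm:identifying}, invoking Lemma~\ref{lemma:2fijas} together with Remark~\ref{remark:strong-fixed} in place of Lemma~\ref{lemma:fijas-impar}. Let $v\neq w$ in $\mathbb{Z}^2$ and write $w=v+\vectwo{x}{y}$. Since $\cC$ is invariant under the $90^\circ$ rotation, I may restrict to the sector $x>y\geq 0$; the diagonal cases $|x|=|y|$ fall out of Remark~\ref{remark:strong-fixed} directly, as the ``opposite-corner'' codeword sits at distance $r$ or $r+1$ from one vertex and strictly farther from the other. Remark~\ref{remark:strong-fixed} then splits the remaining argument into two regimes: either (Int) $B_r(v)$ has a codeword in each of the interior octants $\URi(v),\ULi(v),\BRi(v),\BLi(v)$, or (Ext) $B_r(v)$ has a codeword in each of the exterior shells $\URe(v),\ULe(v),\BRe(v),\BLe(v)$.

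In regime (Int), the codeword $c\in\ULi(v)$ lies in $K_r(v)$ and a short $\ell_1$ check, splitting on whether the vertical coordinate of $c-v$ is at most $y$ or larger, yields $d(c,w)>r$, so $c$ $r$-separates $v$ and $w$. In regime (Ext), I would reproduce the case analysis of Theorem~\ref{thm:identifying}: take $c_1=v+\vectwo{a}{a-r-1}\in\BRe(v)\cap\cC$ with $1\leq a\leq r$ and $c_2=v+\vectwo{b}{r+1-b}\in\URe(v)\cap\cC$ with $1\leq b\leq r$; if $c_1\in K_r(w)$ or $c_2\in K_r(w)$ separation is immediate, otherwise the shifted codewords $c_1+\vectwo{-(t+d)}{t}$ and $c_2+\vectwo{-2t-d}{-d}$ stay outside $K_r(w)$ (their $\ell_1$ distances to $w$ grow by $d$ and $2t$ respectively) and fall inside $K_r(v)$ precisely when $a>d/2$ and $b>t$. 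The residual bad range $1\leq a\leq d/2$, $1\leq b\leq t$ is attacked by forcing $c_1-c_2=\vectwo{a-b}{a+b-N-2}$ into the same set of representatives of $\mathbb{Z}^2/\cC\mathbb{Z}^2$ used in the odd case, now with $N=2r$.

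The main obstacle is that this rectangle argument fails at a single corner. In the odd case $a+b-N-1>-N$ for all $a,b\geq 1$, so $\vectwo{0}{-N}$ lies strictly outside the bad rectangle and all its elements represent nonzero cosets. In the even case $a+b-N-2\geq -N$ with equality exactly at $a=b=1$, so only that one pair escapes, and in it both $v+\vectwo{1}{r}$ and $v+\vectwo{1}{-r}$ are codewords. I would dispatch this corner by hand using Lemma~\ref{lemma:codewords-par}: the vector $\vectwo{-r}{-r}=-\vectwo{N/2}{N/2}$ is a codeword, so $c:=c_2+\vectwo{-r}{-r}=v+\vectwo{1-r}{0}$ is a codeword, with $d(c,v)=r-1$ and $d(c,w)=r+x+y-1$; thus $c$ separates $v$ from $w$ whenever $x+y\geq 2$, and in the one remaining subcase $(x,y)=(1,0)$ the codeword $c_1$ itself separates since $d(c_1,w)=r$ while $d(c_1,v)=r+1$. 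Packaging this corner via Lemma~\ref{lemma:codewords-par}, thereby exploiting the parity forced by $\gcd(2t,d)=2$, is the delicate extra ingredient beyond the odd-case proof.
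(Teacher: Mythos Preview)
Your proposal is correct and follows essentially the same route as the paper's proof: both invoke Lemma~\ref{lemma:2fijas} and Remark~\ref{remark:strong-fixed}, reduce by rotation to $x>y\geq 0$, work with $c_1\in\BRe(v)$ and $c_2\in\URe(v)$, shift by the same lattice vectors to cover $a>d/2$ or $b>t$, and then use the rectangle of representatives to force $a=b=1$, dispatching that corner via Lemma~\ref{lemma:codewords-par} and the separating codeword $v+\vectwo{1-r}{0}$ (you reach it as $c_2+\vectwo{-r}{-r}$, the paper as $c_1+\vectwo{-r}{r}$, which is the same point). The only cosmetic difference is that you treat the interior regime~(Int) explicitly via a codeword in $\ULi(v)$, whereas the paper simply asserts one may assume the exterior regime---and your parenthetical ``distances grow by $d$ and $2t$'' is a slight overstatement (it holds in the generic sign configuration but not literally in all cases), though the needed conclusion $c_i'\notin K_r(w)$ is nonetheless true.
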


\begin{proof} By previous Lemma \ref{lemma:2fijas}, all $r$-balls are $r$-fixed. Now, it has to be proved that any $v\in \mathbb{Z}^{2}$ is $r$-separated from $w=v+\vectwo{x}{y}$, for any $\vectwo{x}{y} \in \mathbb{Z}^{2}$.
First, as in the proof of Theorem \ref{thm:identifying} it can be assumed that $x>y\geq 0$.
As a consequence of Remark \ref{remark:strong-fixed} it can be assumed that there are codewords in each of $\ULe(v),\URe(v),\BLe(v),\BRe(v)$.
Let $c_1\in\cC\cap\BRe(v)$ which is not in $K_r(v)$. If $c_1$ is in $K_r(w)$ the proof is finished.

Since $c_1\in \BRe(v),$ this implies that it can be expressed as $c_1=v+\vectwo{a}{a-r-1}$. Then, $$c_1+\vectwo{-(t+d)}{t}=v+\vectwo{a-(t+d)}{a+t-r-1}$$
	is a codeword which is further from $K_r(w)$ than $c_1$. When $a>\frac{d}{2}$, it is in $K_r(v)$.

Proceeding analogously, but for $\URe(v)$, it is obtained that $c_2=v+\vectwo{b}{r-b+1}$.
Then, let us consider
	$$c_2+\vectwo{-(t+d)}{t}-\vectwo{t}{t+d}=c_2+\vectwo{-2t-d}{-d}
		=v+\vectwo{b-2t-d}{r-b-d+1}
		$$
which is further from $K_r(w)$ than $c_2$ and when $b>t$ it is in $K_r(v)$.\par

Finally, the case when both codewords $c_1,c_2$ are in the left ranges has to be analyzed.
For that, when $1\leq a\leq\frac{d}{2}$ and $1\leq b\leq t$, with $c_1,c_2\in\cC$ is obtained that $c=c_1-c_2\in\cC$. Then, $c=c_1-c_2=\vectwo{a-b}{a+b-1-N}=\vectwo{x'}{y'}$ is in the rectangle defined by:
$$
	\{1-t\leq x'\leq \frac{d}{2}-1 < d,\ -N\leq y'\leq t+\frac{d}{2}-N-2 < t+d-N\}
$$
which includes the point $\vectwo{0}{-N}$. This rectangle is a subset of  the set of representatives of $\Z^2/\cC \Z^2$
\begin{multline*}
	\{1-2t\leq x'\leq -t, -N\leq y'\leq -N+t-1\} \cup\\
	\{ 1-t\leq x'\leq d, -N\leq y'\leq -N+t+d-1\}
\end{multline*}
and consequently, all its elements are different. Therefore, $c$ is the only element congruent with $\vectwo{0}{0}$ (using Lemma \ref{lemma:codewords-par}) and thus
$x'=0$ and $y'=-N$. Follows that $a=b=1$ and $c_1=\vectwo{1}{-r}$. If $x=1$ then $c_1$ separates $v$ from $w$.
If $x>1$ then $c_1+\vectwo{-\frac{N}{2}}{\frac{N}{2}}=\vectwo{1-r}{0}$
(which is another codeword by Lemma \ref{lemma:codewords-par}) separates $v$ from $w$.
\end{proof}

\begin{remark}
Note that, if $N = \det(C)$ denotes the volume of the fundamental parallelogram of the lattice code $\cC$, in both cases $r = \lfloor \frac{N}{2} \rfloor$ is the identifying radius.
In the first case, that is $\gcd(2t, d)=1$, by computational analysis it has been observed that the obtained radius in Theorem \ref{thm:identifying} seems to be the minimal one. However, in this second case, smaller radii have been found that make the code also identifying. For example, it has been observed that in many cases $t^2+(d-1)t+\frac{d^2-d+1}{2}$ is also a possible radius of identification, which is smaller than the one given in Theorem \ref{thm:identifying-par}.
\end{remark}

\subsection{Third Case:  $\gcd(2t, d) \geq 3$}

In previous subsections it has been computed the radius $r$ such that the code $\cC = \langle \vectwo{t}{t+d}, \vectwo{-(t+d)}{t}\rangle$ is $r$-identifying over the lattice $\mathbb{Z}^{2}$, wherever $t$ and $d$ are two positive integers such that $\gcd(2t, d) \leq 2$. In this subsection it is established that, in any other election for $t$ and $d$, the lattice code $\cC$ is not $r$-identifying, for any $r \in \mathbb{Z}$.

\begin{lemma}\label{lemma:diagonals} For any group code $\cC$ over $\Z^2$ for which there exists $\vectwo{x}{y}$ such that there is no codeword in any of the subsets:
\begin{align*}
\ell_1&= \left \{\vectwo{x+k}{y+k} \ |\ k\in\Z \right \},\\
\ell_2&= \left \{\vectwo{x+k}{y-k} \ |\ k\in\Z \right \},\\
\ell_3&= \left \{\vectwo{x+k}{y+k+1} \ |\ k\in\Z \right \}\text{ and}\\
\ell_4&=\left \{\vectwo{x+k}{y-k+1} \ |\ k\in\Z \right \},
\end{align*}
then, $\cC$ is not an identifying code for any radius.
\end{lemma}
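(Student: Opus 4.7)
My plan is to perform a $45^\circ$ coordinate change, turning $L^1$ into $L^\infty$, so that the hypothesis becomes two independent modular conditions on the projections of $\cC$.

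Set $\pi_+(u) = u_1 + u_2$ and $\pi_-(u) = u_2 - u_1$. Using the identity
\[ |u - v|_1 \;=\; \max\bigl(|\pi_+(u) - \pi_+(v)|,\ |\pi_-(u) - \pi_-(v)|\bigr), \]
the $L^1$-balls of $\Z^2$ become axis-aligned squares, and the hypothesis reads $\pi_+(\cC) \cap \{x+y,\ x+y+1\} = \emptyset$ and $\pi_-(\cC) \cap \{y-x,\ y-x+1\} = \emptyset$. Since $\cC$ is a subgroup of $\Z^2$, $\pi_\pm(\cC) = n_\pm\Z$ for some $n_\pm \ge 0$, and avoidance of two consecutive integers forces $n_\pm \in \{0\} \cup \{n : n \ge 3\}$ (the cases $n_\pm \in \{1,2\}$ would contain at least one of every pair of consecutive integers, contradicting the hypothesis).

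If $n_+ = 0$ or $n_- = 0$, then $\cC$ is contained in one of the diagonal lines $\pi_+ = 0$ or $\pi_- = 0$; the triangle inequality gives $d(v,c) \ge |\pi_\pm(v)|$ for every $c \in \cC$, so $K_r(v) = \emptyset$ as soon as $|\pi_\pm(v)| > r$ and $\cC$ cannot be $r$-identifying.

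For the main case $n_+, n_- \ge 3$ I will exhibit, for each $r$, an integer point $v' \in \Z^2$ with $K_r(v') = K_r(v' + e_1)$. A direct examination of the two ball-boundaries in rotated coordinates shows that $B_r(v') \triangle B_r(v' + e_1)$ is contained in the union of the four lines
\[ \pi_+ = \pi_+(v') - r,\quad \pi_+ = \pi_+(v') + r + 1,\quad \pi_- = \pi_-(v') - r,\quad \pi_- = \pi_-(v') + r + 1, \]
and each such line misses $\cC$ if and only if the corresponding value avoids $n_\pm \Z$, that is, $\pi_+(v') \not\equiv r,\, -r-1 \pmod{n_+}$ and $\pi_-(v') \not\equiv r,\, -r-1 \pmod{n_-}$.

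The only delicate point, and what I expect to be the main obstacle, is realising these modular data by an actual integer point $v' \in \Z^2$. Writing $P = \pi_+(v')$ and $Q = \pi_-(v')$, the integrality of $v'$ is equivalent to $P \equiv Q \pmod 2$. Each admissible set of residues modulo $n_\pm \ge 3$ is the complement of at most two classes, hence non-empty, and it contains representatives of both parities: automatically when $n_\pm$ is odd (each residue class modulo an odd number contains integers of both parities), and when $n_\pm \ge 4$ is even because the two forbidden residues $r$ and $-r-1$ differ by the odd number $2r+1$ and therefore lie in opposite parity classes, leaving at least one admissible residue of each parity. Matching parities of $P$ and $Q$ then produces $v'$, and since $v' \ne v' + e_1$ this witness shows $\cC$ fails to be $r$-identifying.
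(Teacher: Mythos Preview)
Your proof is correct and takes a genuinely different route from the paper. The paper argues by contradiction: it fixes a radius $r$, looks at the vertex $w=\vectwo{x}{y-r-1}$, and performs a three-way case analysis on which of the ``inner'' and ``outer'' diagonals near the bottom of $B_r(w)$ carry codewords, in each case exhibiting two adjacent vertices with the same $K_r$-set. Your argument instead exploits the rotation $u\mapsto(\pi_+(u),\pi_-(u))$ globally: the group hypothesis turns into $\pi_\pm(\cC)=n_\pm\Z$, the four empty diagonals force $n_\pm\in\{0\}\cup[3,\infty)$, and then for every $r$ you produce the witness $v'$ directly by solving two independent congruences and matching parities. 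This replaces the paper's geometric case split by a short arithmetic computation, and it also makes transparent \emph{why} the threshold is $3$ (two consecutive integers are missed). The paper's approach, on the other hand, stays entirely in the original coordinates and reuses the diagonal language already set up for the positive results, so it integrates more smoothly with the surrounding section. One small remark: your displayed lines for $\pi_-$ are written as $\pi_-(v')-r$ and $\pi_-(v')+r+1$, which is correct for the paper's convention $e_1=\vectwo{0}{1}$ (both $\pi_+$ and $\pi_-$ shift by $+1$); with the standard $e_1=(1,0)$ the two $\pi_-$ values would be swapped in sign, but the forbidden residue set $\{r,-r-1\}$ modulo $n_-$ is the same up to negation and the argument is unaffected.
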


\begin{proof} The proof is made by \textsl{reductio ad absurdum}. Therefore, let us assume that $\cC$ identifies for the radius $r \in \N$.
Let $v=\vectwo{x}{y} \in \mathbb{Z}^2$ be the vertex of the hypothesis.
The translations of $\ell_i$ are referred as diagonals and two diagonals $\ell_a,\ell_b$ are contiguous when $\ell_a=\ell_b\pm \vectwo{0}{1}$. For example,
$\ell_1$ and $\ell_3$ are two contiguous diagonals. First note that for any two contiguous diagonals, at least one must be codewordless. Otherwise, since $\cC$ is a group, there would be codewords in all diagonals. Moreover, if there is a codeword in a diagonal $\ell$, then the two contiguous diagonals in both directions
($\ell+\vectwo{0}{1},\ell+\vectwo{0}{2},\ell-\vectwo{0}{1},\ell-\vectwo{0}{2}$) must be codewordless.\par

Now, consider the word $w=\vectwo{x}{y-r-1}$. The following diagonals at the bottom
of the ball $B_r(w)$ are denoted by:
\begin{align*}
\ell_1'&= \left \{\vectwo{x+k}{y+k-2r-2} \ | \ k\in\Z \right \}&
\ell_2'&= \left \{\vectwo{x+k}{y-k-2r-2} \ | \ k\in\Z\right\}\\
\ell_3'&= \left \{\vectwo{x+k}{y+k-2r-1} \ | \ k\in\Z \right\}&
\ell_4'&= \left \{\vectwo{x+k}{y-k-2r-1} \ | \ k\in\Z \right \}\\
\ell_5'&= \left \{\vectwo{x+k}{y+k-2r} \ | \ k\in\Z \right \}&
\ell_6'&= \left \{\vectwo{x+k}{y-k-2r} \ | \ k\in\Z \right \}
\end{align*}

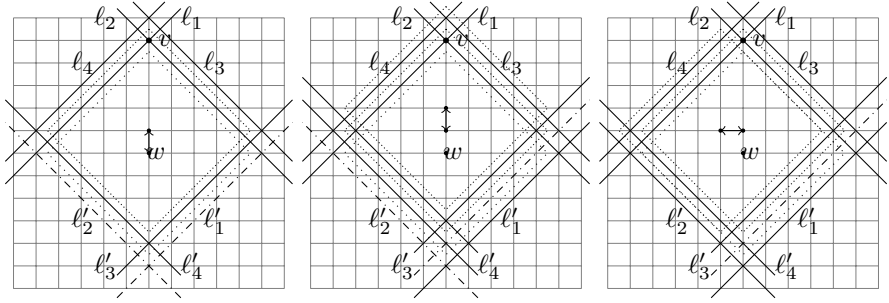
\begin{figure}
	\begin{center}
		\begin{tikzpicture}
			[scale=.3]
		\def\r{4}
		\draw[help lines] (-6,-6) grid (6,6);
		\filldraw (0,\r+1) circle (3pt) node[anchor=west] {$v$};
		\filldraw (0,1) circle (2pt);
		\filldraw (0,0) circle (2pt) ++(2ex,0) node {$w$};
		\draw[<->] (0,1)-- node[auto]{} (0,0);
		\draw[dotted] (0,\r+.5) -- (\r+.5,0) -- (0,-\r-.5) -- (-\r-.5,0) --cycle;
		\draw[densely dotted] (0,\r+1.5) -- (\r+.5,1) -- (0,-\r+.5) -- (-\r-.5,1) --cycle;
		\draw (0,\r+1) +(45:2) -- +(45:-9);
		\draw (0,\r+2) +(45:1) -- +(45:-9);
		\draw (0,\r+1) +(135:2) -- +(135:-9);
		\draw (0,\r+2) +(135:1) -- +(135:-9);
		\node[anchor=west] at (1,\r+2) {$\ell_1$};
		\node[anchor=east] at (-1,\r+2) {$\ell_2$};
		\node[anchor=west] at (2,\r) {$\ell_3$};
		\node[anchor=east] at (-2,\r) {$\ell_4$};

		\draw (0,-\r) +(45:-2) -- +(45:9);
		\draw[dashdotted] (0,-\r-1) +(45:-2) -- +(45:9);
		\draw (0,-\r) +(135:-2) -- +(135:9);
		\draw[dashdotted] (0,-\r-1) +(135:-2) -- +(135:9);
		\node[anchor=east] at (-1,-\r-1) {$\ell_3'$};
		\node[anchor=west] at (1,-\r-1) {$\ell_4'$};
		\node[anchor=west] at (2,-\r+1) {$\ell_1'$};
		\node[anchor=east] at (-2,-\r+1) {$\ell_2'$};
		\end{tikzpicture}
		\begin{tikzpicture}
			[scale=.3]
		\def\r{4}
		\draw[help lines] (-6,-6) grid (6,6);
		\filldraw (0,\r+1) circle (3pt) node[black,anchor=west] {$v$};
		\filldraw (0,2) circle (2pt);
		\filldraw (0,1) circle (2pt);
		\filldraw (0,0) circle (2pt) ++(2ex,0) node {$w$};
		\draw[<->] (0,2)-- node[auto]{} (0,1);
		\draw[dotted] (0,\r+.5) -- (\r+.5,0) -- (0,-\r-.5) -- (-\r-.5,0) --cycle;
		\draw[densely dotted] (0,\r+1.5) -- (\r+.5,1) -- (0,-\r+.5) -- (-\r-.5,1) --cycle;
		\draw[densely dotted] (0,\r+2.5) -- (\r+.5,2) -- (0,-\r+1.5) -- (-\r-.5,2) --cycle;
		\draw (0,\r+1) +(45:2) -- +(45:-9);
		\draw (0,\r+2) +(45:1) -- +(45:-9);
		\draw (0,\r+1) +(135:2) -- +(135:-9);
		\draw (0,\r+2) +(135:1) -- +(135:-9);
		\node[anchor=west] at (1,\r+2) {$\ell_1$};
		\node[anchor=east] at (-1,\r+2) {$\ell_2$};
		\node[anchor=west] at (2,\r) {$\ell_3$};
		\node[anchor=east] at (-2,\r) {$\ell_4$};

		\draw[dashdotted] (0,-\r) +(45:-2) -- +(45:9);
		\draw (0,-\r-1) +(45:-2) -- +(45:9);
		\draw (0,-\r) +(135:-2) -- +(135:9);
		\draw[dashdotted] (0,-\r-1) +(135:-2) -- +(135:9);
		\node[anchor=east] at (-1,-\r-1) {$\ell_3'$};
		\node[anchor=west] at (1,-\r-1) {$\ell_4'$};
		\node[anchor=west] at (2,-\r+1) {$\ell_1'$};
		\node[anchor=east] at (-2,-\r+1) {$\ell_2'$};

		\draw[] (0,-\r+1) +(45:-2) -- +(45:9);
		\draw[] (0,-\r+1) +(135:-2) -- +(135:9);
		\end{tikzpicture}
		\begin{tikzpicture}
			[scale=.3]
		\def\r{4}
		\draw[help lines] (-6,-6) grid (6,6);
		\filldraw (0,\r+1) circle (3pt) node[black,anchor=west] {$v$};
		\filldraw (0,1) circle (2pt);
		\filldraw (-1,1) circle (2pt);
		\filldraw (0,0) circle (2pt) ++(2ex,0) node {$w$};
		\draw[<->] (0,1)-- node[auto]{} (-1,1);
		\draw[dotted] (0,\r+.5) -- (\r+.5,0) -- (0,-\r-.5) -- (-\r-.5,0) --cycle;
		\draw[densely dotted] (0,\r+1.5) -- (\r+.5,1) -- (0,-\r+.5) -- (-\r-.5,1) --cycle;
		\draw[densely dotted] (-1,\r+1.5) -- (\r-.5,1) -- (-1,-\r+.5) -- (-\r-1.5,1) --cycle;
		\draw (0,\r+1) +(45:2) -- +(45:-9);
		\draw (0,\r+2) +(45:1) -- +(45:-9);
		\draw (0,\r+1) +(135:2) -- +(135:-9);
		\draw (0,\r+2) +(135:1) -- +(135:-9);
		\node[anchor=west] at (1,\r+2) {$\ell_1$};
		\node[anchor=east] at (-1,\r+2) {$\ell_2$};
		\node[anchor=west] at (2,\r) {$\ell_3$};
		\node[anchor=east] at (-2,\r) {$\ell_4$};

		\draw[dashdotted] (0,-\r) +(45:-2) -- +(45:9);
		\draw (0,-\r-1) +(45:-2) -- +(45:9);
		\draw (0,-\r) +(135:-2) -- +(135:9);
		\draw (0,-\r-1) +(135:-2) -- +(135:9);
		\node[anchor=east] at (-1,-\r-1) {$\ell_3'$};
		\node[anchor=west] at (1,-\r-1) {$\ell_4'$};
		\node[anchor=west] at (2,-\r+1) {$\ell_1'$};
		\node[anchor=east] at (-2,-\r+1) {$\ell_2'$};

		\draw (0,-\r+1) +(45:-2) -- +(45:9);
		\end{tikzpicture}
	\end{center}
	\caption{Possibilities for the diagonals}
	\label{fig:diagonals}
\end{figure}

Note that $w$ is at distance $r+1$ from $v$, and hence from $\ell_1$ and $\ell_2$. Also, $v$ is at distance $r+2$ from $\ell_3$ and $\ell_4$. Since $\ell_1'$ and $\ell_2'$ are at distance $r+1$ from $w$, it will be said that they are \emph{outer} of $B_r (w)$. Also, since $\ell_3',\ell_4'$ at distance $r$ it will be said that they are \emph{inner} of $B_r (w)$. At least one of $\ell_1',\ell_3'$ is codewordless and the same happens for $\ell_2',\ell_4'$. In Figure \ref{fig:diagonals} an example of the relative locations of $v$, $w$ and the diagonals is shown. Now, three different cases are considered according to the location of the codewords with respect to the diagonals, that is $\ell_1',\ell_2'$ outside and $\ell_3',\ell_4'$ inside:
\begin{enumerate}
\item All the diagonals with codewords (0, 1 or 2) are outer diagonals. In the figure $\ell_1'$ and $\ell_2'$ are dashed, indicating
	that they have some codeword, all other diagonals are solid, indicating that they have no codewords.
	As diagonals $\ell_3'$ and $\ell_4'$ are codewordless it is obtained that $K_r(w)\subseteq K_r(w+\vectwo{0}{1})$.
	Since the same is true for $\ell_3$ and $\ell_4$, it is obtained that $K_r(w+\vectwo{0}{1})\subseteq K_r(w)$.
	Thus $K_r(w)=K_r(w+\vectwo{0}{1})$ and they are not separable. In the figure it can be seen that in the symmetric difference of the two balls there
	are only solid lines, which indicates the non separability.
\item Two diagonals with codewords are inner, or 1 is inner and 1 is outer. Since two contiguous diagonals above any diagonal with a codeword must be
	codewordless and one of $\ell_1',\ell_3'$ has a codeword, then $\ell_5'$ is codewordless.
	Respectively, since $\ell_2'$ or $\ell_4'$ has a codeword then $\ell_6'$ is codewordless.
	Now, we have the same position as in the previous item of the proof, but translated by $\vectwo{0}{1}$ and
	that $K_r(w+\vectwo{0}{1})=K_r(w+\vectwo{0}{2})$. Hence, they are not separable.
\item Only one diagonal with codewords is inner. Without loss of generality let us assume that $\ell_3'$ is the inner diagonal with some codeword.
	Therefore, it is obtained that $\ell_1',\ell_2',\ell_4'$ and $\ell_5'$ are codewordless. Analogously to the previous cases, but for
	$w+\vectwo{0}{1}$ and $w+\vectwo{-1}{1}$ implies that $K_r(w+\vectwo{0}{1})=K_r(w+\vectwo{-1}{1})$ are not separable.
\end{enumerate}
In all cases non-separable codewords have been found, which contradicts the $r$-identifiability of $\cC$
and thus the proof is completed.
\end{proof}

\begin{theorem} Let $t, d$ be two positive integers such that $\gcd(2t,d)\geq 3$. Then, the code
$\cC = \langle \vectwo{t}{t+d}, \vectwo{-(t+d)}{t}\rangle$ is not identifying for any radius over $\mathbb{Z}^{2}$.
\end{theorem}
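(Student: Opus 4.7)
The plan is to apply Lemma \ref{lemma:diagonals}: it suffices to exhibit a single vertex $\vectwo{x_0}{y_0}\in\Z^2$ through which all four diagonals $\ell_1,\ell_2,\ell_3,\ell_4$ miss every codeword.

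The key observation is a parity-type obstruction on codewords. A codeword has the form $c=\vectwo{at-b(t+d)}{a(t+d)+bt}$ for some $a,b\in\Z$. Computing the sum and difference of its coordinates gives
\begin{align*}
c_1+c_2 &= a(2t+d)-bd,\\
c_2-c_1 &= ad+b(2t+d).
\end{align*}
Let $g=\gcd(2t,d)\geq 3$; since $\gcd(2t+d,d)=\gcd(2t,d)=g$, both expressions lie in the ideal $g\Z$. Hence \emph{every} codeword satisfies $c_1+c_2\equiv 0\pmod g$ and $c_2-c_1\equiv 0\pmod g$.

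Now take $\vectwo{x_0}{y_0}=\vectwo{0}{1}$. The diagonals $\ell_1,\ell_3$ through this point are characterized by $y-x\in\{1,2\}$, and the diagonals $\ell_2,\ell_4$ by $x+y\in\{1,2\}$. Since $g\geq 3$, none of the values $1,2$ is divisible by $g$, so by the observation above no codeword can lie on any of these four diagonals.

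Applying Lemma \ref{lemma:diagonals} with this $\vectwo{x_0}{y_0}$ yields that $\cC$ is not $r$-identifying for any $r$, which is the desired conclusion. The only step requiring care is verifying that the set of attainable values of $a(2t+d)-bd$ (and likewise $ad+b(2t+d)$) is exactly $g\Z$, which follows immediately from $\gcd(2t+d,d)=g$; no further obstacle arises, since the choice $\vectwo{0}{1}$ then trivially falsifies all four divisibility conditions whenever $g\geq 3$.
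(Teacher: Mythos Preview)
Your proof is correct and follows essentially the same strategy as the paper: show that for every codeword both $c_2-c_1$ and $c_1+c_2$ lie in $g\Z$ with $g=\gcd(2t,d)$, then observe that when $g\geq 3$ this forces two contiguous diagonals in each direction to be codeword-free, so Lemma~\ref{lemma:diagonals} applies. The only cosmetic difference is that the paper computes the difference $c_2-c_1$ explicitly and then appeals to the $90^\circ$-rotation invariance of $\cC$ to handle the orthogonal diagonals, whereas you compute the sum $c_1+c_2$ directly; your version is arguably cleaner. (One small remark: you do not actually need that the attainable values of $a(2t+d)-bd$ are \emph{exactly} $g\Z$, only that they lie in $g\Z$, which is immediate.)
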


\begin{proof}
A codeword $\vectwo{x}{y}$ of $\cC$ satisfies that there exist $a, b \in \mathbb{Z}$ such that:
$$
\left\{
\begin{array}{l}
x=at-b(t+d)\\
y=a(t+d)+bt
\end{array}\right.
$$

Let $h$ be an integer such that $y=x+h$, so $h=ad+b(2t+d)$ is obtained.
Hence, $\gcd(2t+d,d)=\gcd(2t,d)$ divides $h$.
This implies that the contiguous diagonals $\{\vectwo{k}{k+h} \ | \ k\in\Z\}$ are codewordless for $0<h<\gcd(2t,d)$.\par

When $\gcd(2t,d)\geq 3$, at least two contiguous diagonals have not codeword.
Since the code is invariant under rotations, contiguous orthogonal diagonals are also codewordless.
Thus $\cC$ satisfies the hypothesis of the Lemma \ref{lemma:diagonals} and therefore, it is not
identifying for any radius.
\end{proof}

\subsection{On the Density of the Construction} \label{subsec:Densidad}

Several papers have been developed for the computation of identifying codes attaining the minimal density bounds. To define the density of a code over $\Z^2$, let us denote by $Q_n$ the set of vertices $\vectwo{x}{y} \in \mathbb{Z}^{2}$ with $|x| \leq n$ and $|y| \leq n$. Then, the \emph{density} of a code $\cC$ is defined as:

$$D(\cC) = \lim \sup _{n \rightarrow \infty} \frac{\cC \cap Q_n}{Q_n}.$$

In \cite{Charonsmallradius} there is a compilation of lower and upper bounds of the best possible density for codes in some infinite regular graphs,
in terms of the identifying radius. For the case of the square grid it is obtained that:

	$$\frac{3}{8r+4}\leq D(r)\leq\begin{cases}\frac{2}{5r}&\text{if $r$ even,}\\
	\frac{2r}{5r^2-2r+1}&\text{if $r$ odd.}\end{cases}$$

In particular, in \cite{Benhaimexact} it was proved for 1-identifying codes that:

\begin{theorem} \cite{Benhaimexact} The minimum density of 1-identifying codes in $\mathbb{Z}^{2}$ is $\frac{7}{20}.$
\end{theorem}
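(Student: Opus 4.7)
The plan is to establish matching upper and lower bounds of $\frac{7}{20}$ on the density of 1-identifying codes of $\mathbb{Z}^2$.

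For the upper bound, I would exhibit an explicit 1-identifying code $\cC\subset\mathbb{Z}^2$ of density exactly $\frac{7}{20}$. A natural approach is to take $\cC$ periodic with respect to a sublattice $\Lambda\subset\mathbb{Z}^2$ of index $20$, so that a fundamental domain meets $\cC$ in exactly $7$ vertices. (The family constructed in Section~\ref{sec:construccion} uses index $N=t^2+(t+d)^2$, which never equals $20$, so a separate pattern is needed.) Verifying that $\cC$ is 1-identifying then reduces to a finite check on a fundamental domain: every vertex must be $1$-covered, and distinct centres must yield distinct $K_1$-sets.

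For the lower bound, the plan is a share argument. Given any 1-identifying code $\cC$, assign to each pair $(c,v)$ with $c\in\cC$ and $v\in B_1(c)$ the weight $\frac{1}{|K_1(v)|}$; this is well defined since $|K_1(v)|\geq 1$ by the covering condition. Each vertex absorbs total weight $\sum_{c\in K_1(v)}\frac{1}{|K_1(v)|}=1$, so up to boundary effects
$$|Q_n|=\sum_{c\in\cC\cap Q_{n+1}}\sigma(c),\qquad \sigma(c):=\sum_{v\in B_1(c)}\frac{1}{|K_1(v)|}.$$
Hence if one can prove the local inequality $\sigma(c)\leq\frac{20}{7}$ for every codeword $c$, then $|\cC\cap Q_{n+1}|\geq\frac{7}{20}|Q_n|$, and passing to the limit yields $D(\cC)\geq\frac{7}{20}$.

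The substantive obstacle is the local inequality $\sigma(c)\leq\frac{20}{7}$, equivalently that the harmonic mean of $|K_1(v)|$ over $v\in B_1(c)$ is at least $\frac{7}{4}$. Since the five sets $K_1(v)$ for $v\in B_1(c)$ must all contain $c$, be nonempty, and be pairwise distinct subsets of $\cC\cap B_2(v)$, there is an immediate lower bound on the number of additional codewords in $B_2(c)$; but this alone yields only $\sigma(c)\leq 5$. Sharpening to $\frac{20}{7}$ requires a careful case analysis on the codeword pattern in $B_2(c)$ (and sometimes $B_3(c)$): whenever a configuration would force too many of the $K_1(v)$ to be singletons or small doubletons, the 1-identification constraint must produce additional codewords nearby to separate the relevant pairs. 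This case enumeration, essentially the content of the Ben-Haim--Litsyn argument, is the main technical challenge; sharpness of $\frac{7}{20}$ is witnessed by the extremal configurations arising in the upper-bound construction.
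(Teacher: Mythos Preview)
The paper does not prove this theorem: it is quoted from \cite{Benhaimexact} and stated without proof, serving only as context for the density discussion in Subsection~\ref{subsec:Densidad}. There is therefore no proof in the paper to compare your proposal against.

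That said, one cautionary remark on your lower-bound plan. The pointwise inequality $\sigma(c)\le \tfrac{20}{7}$ need not hold for every codeword $c$ in an arbitrary $1$-identifying code: one can have, for instance, $K_1(c)=\{c\}$ and $K_1(v)=\{c,c_v\}$ for each of the four neighbours $v$ of $c$, with the $c_v$ pairwise distinct; this is locally consistent with $1$-identification and gives $\sigma(c)=1+4\cdot\tfrac12=3>\tfrac{20}{7}$. The Ben-Haim--Litsyn argument does not reduce to a uniform share bound of this type; it requires a genuine discharging step in which excess share at some codewords is absorbed by deficit elsewhere, together with a substantial case analysis. So while your overall two-part strategy (explicit periodic construction for the upper bound, share/discharging for the lower bound) is the right shape, the specific local inequality you propose as the target is too strong as stated and would need to be replaced by an averaged version.
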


In the following, the density for our construction of identifying codes is calculated. Although the values are not far from the optimal one, they are established here since they can be exactly computed in order to complete our study.

\begin{lemma} Let $\mathcal{G}(M)$ be such that $\cC = \langle \vectwo{t}{t+d},
\vectwo{-(t+d)}{t}\rangle$ is an $r$-identifying code.  Then, the density of $\cC$ is

$$D(r) = \begin{cases}\frac{1}{2r+1}&\text{if $r$ is odd,}\\
	\frac{1}{2r}&\text{if $r$ is even.}\end{cases}$$

\end{lemma}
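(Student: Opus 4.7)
The strategy is to reduce the computation to the basic fact that the density of any sublattice of $\Z^2$ in $\Z^2$ is the reciprocal of its index, and then just identify this index with $2r+1$ or $2r$ according to which construction case applies.

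First I would set up the lattice-density reduction. Since $\cC=C\Z^2$ with $C=\bigmattwo{t}{-(t+d)}{t+d}{t}$ non-singular, the $N=|\det(C)|=t^2+(t+d)^2=2t^2+2td+d^2$ cosets of $\cC$ partition $\Z^2$. The claim $D(\cC)=1/N$ I would justify by noting that each coset, being a translate of $\cC$, has the same asymptotic density in the squares $Q_n$: for large $n$ each of the $N$ cosets contributes $|Q_n|/N + O(n)$ points to $Q_n$, so in particular $|\cC\cap Q_n|=|Q_n|/N + O(n)$ and $\lim |\cC\cap Q_n|/|Q_n|=1/N$. (One can also obtain this from the fact that a fundamental parallelogram of $C$ has area $N$ and each such parallelogram contains exactly one lattice point of $\cC$.)

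Next I would convert $1/N$ into the piecewise formula using the hypothesis that $\cC$ is $r$-identifying. By the classification in Section \ref{sec:construccion}, we must have $\gcd(2t,d)\in\{1,2\}$, and the identifying radius is prescribed by Theorems \ref{thm:identifying} and \ref{thm:identifying-par}. In the first case, $r=t^2+td+(d^2-1)/2$ yields $2r+1=2t^2+2td+d^2=N$, so $N$ is odd and $D(\cC)=1/(2r+1)$. In the second case, $r=t^2+td+d^2/2$ yields $2r=2t^2+2td+d^2=N$, so $N$ is even and $D(\cC)=1/(2r)$. Thus the piecewise formula is exactly $1/N$, expressed in the two possible parities that $N$ can take.

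There is no substantial obstacle: the proof is one lattice-index computation plus two algebraic identities. The only point that requires a sentence of care is the rigorous step from the index to the limiting density, and that is a purely routine coset-partition argument.
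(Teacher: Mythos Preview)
Your proposal is correct and follows essentially the same path as the paper: compute $D=1/\det(C)$ and then identify $\det(C)$ with $2r+1$ or $2r$ according to which of the two construction cases applies. The only cosmetic difference is that the paper obtains $1/\det(C)$ via the finite-graph ratio $|\cC|/|V|=\dfrac{\det(M)/\det(C)}{\det(M)}$ in $\mathcal{G}(M)$, whereas you run the coset-counting limit in $\Z^2$; both yield the same value, and your labeling of the two cases by the parity of $N=\det(C)$ is exactly the distinction the paper is using.
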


\begin{proof} Let $C = \begin{pmatrix}t&-(t+d)\\t+d&t\\\end{pmatrix}.$ By definition of density,:

$$D(r) = \lim_{n \rightarrow \infty }\frac{|\mathcal{C} \cap Q_{n}|}{|Q_{n}|} = \frac{|\mathcal{C}|}{|V|} = \frac{\frac{\det(M)}{\det(C)}}{\det(M)} = \frac{1}{\det(C)}.$$

Now, since $\det(C) = t^2+(t+d)^2 = 2r+1$ or $2r$, the result is straightforwardly obtained.
\end{proof}

\begin{remark} The corresponding code in $\Z^2$ has the same density.
\end{remark}

Although the constructed codes have a little higher density than the best constructions
they form a wide family and its regularity may have practical applications.
This fact will be highlighted in next section, where it is taken advantage
of the inclusion of perfect codes in our family of identifying codes.

\section{Adaptive Identification in Cayley Graphs} \label{sec:algoritmo}

In this section the implications of the constructed codes in adaptive identification is considered. Firstly, the approach for adaptive identification presented in \cite{Benhaimadaptive} and \cite{Benhaimadaptive2} is summarized. Then, our proposal for identifying faults with the codes presented in this paper is stated. Finally, a detailed example of this proposal is addressed.\par

In \cite{Benhaimadaptive, Benhaimadaptive2}, \emph{adaptive identification} is defined as a game. In that game, a player secretly chooses a vertex (or none) of the graph and another player tries to find it by making successive queries. For any vertex $v$ of the graph, the queries are of the type:
\begin{center}
	``Is the chosen vertex in $B_r(v)$?''
\end{center}
and these queries are answered truthfully.\par

If $\mathcal C$ is a $r$-identifying code, then the chosen vertex can be found by making queries only to vertices in $\mathcal C$. As proposed in \cite{Benhaimadaptive}, the identification can be done in two stages,
in order to minimize the number of performed queries. The first stage consist on making queries in a $r$-covering code that is intended to be near perfect. Thus, this $r$-covering
is used to find out the ball $B_r(c)$ which contains the chosen vertex. In the second stage, using the $r$-identifying code, the vertex is determined. That is, two different codes with
the same parameter $r$ being the covering and identification radii, respectively, are used to identify the selected vertex. As a consequence, at least three kind of vertices coexist in the graph: vertices belonging to the $r$-covering, vertices belonging to the $r$-identifying and the rest of the vertices of the graph.\par

The proposal in the present paper is to use the same code for both stages, instead of having two different codes. As it was proved in Corollary \ref{col:perfecto}, $t$-perfect codes of the form $\cC = \langle \vectwo{t}{t+1}, \vectwo{-(t+1)}{t}\rangle $ are also $r$-identifying codes for $r=t(t+1)$. Moreover, although they are not the most dense as identifying codes, they are not very far from the bound. Thus, in this alternative configuration there will be only two kind of vertices: codewords and non-codewords. On the contrary, there will be two kind of queries (for different radius $t$ and $r$), both for any codeword $c\in\mathcal C$. The first query is of the type:
\begin{center}
	``Is the chosen vertex in $B_t(c)$?''
\end{center}
and the second one of the type:
\begin{center}
	``Is the chosen vertex in $B_r(c)$?''
\end{center}
Therefore, in the first stage of the method the perfect code will be used to determine if there is a faulty vertex. If so, the vertex that detects the faulty vertex initiates the identification process.\par

Identifying codes have been motivated several times as a good strategy to locate faulty nodes in a multiprocessor system, although they have not been applied at the moment. In such a system, the existence of a faulty node is the exceptional case. Hence, it is important that the first stage of the identification, which is just the detection of the fault, is optimized. This is the reason why in our proposal the detection of the faulty node and consequently  its bound inside a ball or radius $r$ is made with a perfect code. Then, since the second stage (to locate the faulty vertex exactly) would be the exceptional situation, the requirements on the identification radius have been relaxed.\par

Finally, to illustrate this method, a example of this application is detailed next. Let us consider the graph $\mathcal{G}(M)$ defined by $M=\bigmattwo{28}{-4}{4}{28}$, which is graphically represented in Figure \ref{fig:perfectidentifying}. Let us also consider the code defined by the group  $\cC=\langle\vectwo{3}{4},\vectwo{-4}{3}\rangle$ which is also represented in the figure. By Corollary \ref{col:perfecto}, this is a 3-perfect code and a 12-identifying code over $\mathcal{G}(M)$. Note that the graph has $\det(M) = 800$ vertices and the code has $\frac{800}{25}=32$ codewords. Then, following the adaptive strategy detailed above, for each codeword $c \in \cC$ it is periodically checked if there is a faulty vertex in $B_3(c)$ by performing queries of the first type. Then, if and only if the obtained answer is positive, the second stage of the identification is started by the same codeword.\par

Assume that for example node $\vectwo{2}{1}$ has failed, which is represented in bold. Then, the first stage of the procedure detects a faulty vertex and bound it in a ball of 25 vertices, whose center is in this case the codeword $\vectwo{0}{0}$, as can be observed in Figure \ref{fig:usingperfect}. Then, that codeword starts the identification stage by making queries in $B_{12}(c)$ balls, as shown in Figure \ref{fig:usingidentifiying}. Therefore, with a binary decision tree of depth $\lceil\log_2(25)\rceil=5$ as the one in Figure \ref{fig:tree}, the faulty node can be isolated. Each vertex of the tree represents the realization of a query of the second type, continuing by the branch which matches the answer. Branches to the left represent positive answers to the question ``Is the faulty vertex in $B_{12}(c)$''. On the contrary, branches to the right represent negative answers. In this particular case, node $\vectwo{2}{1}$ is identified by the path of queries:

$$ \vectwo{2}{11} \rightarrow \vectwo{-11}{2} \rightarrow \vectwo{8}{-6} \rightarrow \vectwo{6}{8} \rightarrow \vectwo{10}{5}.$$

In non-adaptive identification all the codewords are checked, that is $|\mathcal C|=32$ queries of radius $r$ are done. Our proposal makes 32 queries of radius $t=3$ and then, when necessary, 5 queries of radius $r=12$.

\begin{figure}
    \begin{center}
    \includegraphics[width=.6\columnwidth]{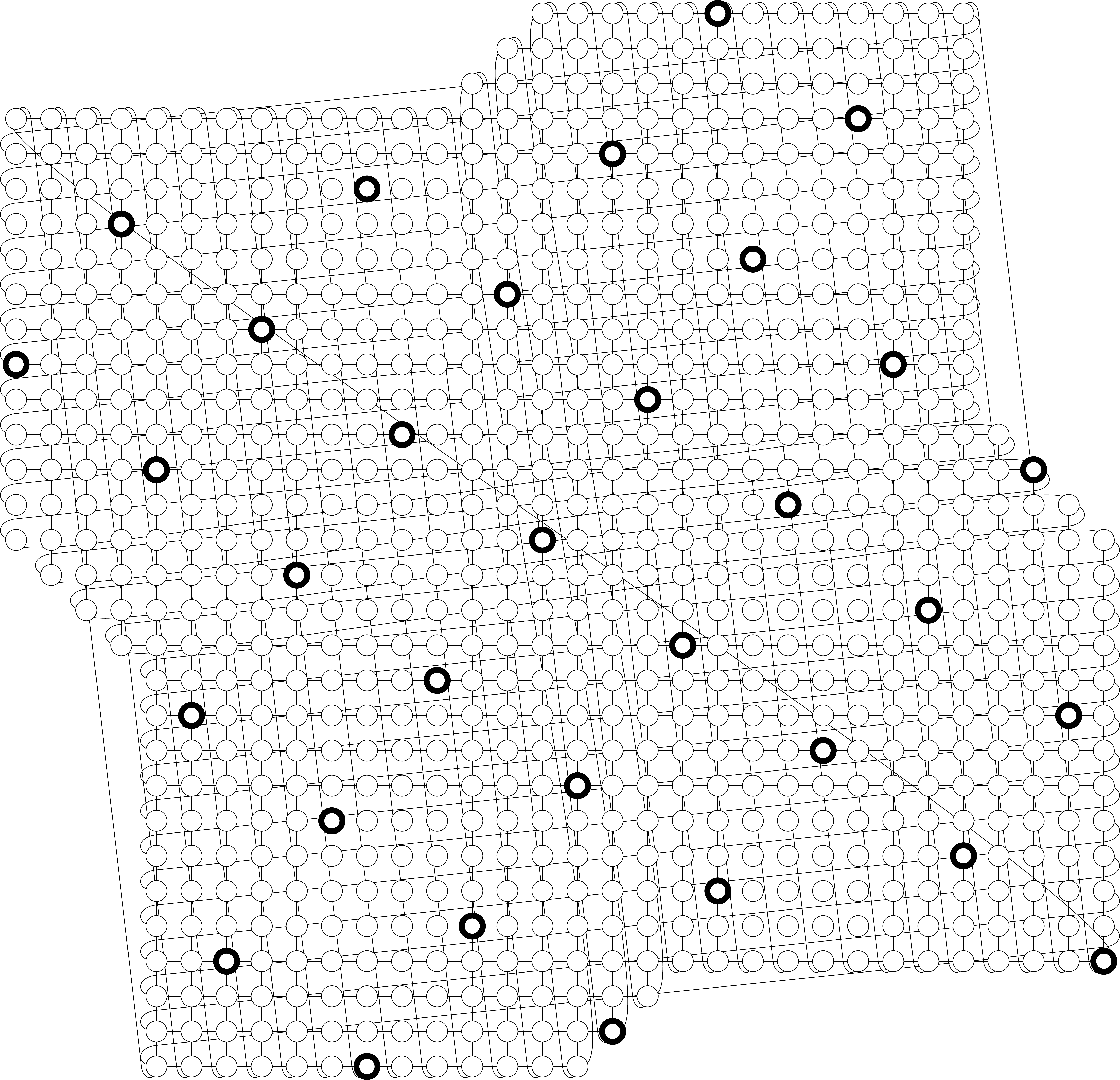}
	\end{center}
    \caption{A 3-perfect and 12-identifying code.}
    \label{fig:perfectidentifying}
\end{figure}

\begin{figure}
    \begin{center}
	\begin{tikzpicture}
	\pgftext[base,left]{\includegraphics[width=.6\columnwidth]{images/28_-4_4_28code3_-4_4_3.pdf}}
    \path
        (current bounding box.south west) coordinate (SW)
        (current bounding box.south east) coordinate (SE)
        (current bounding box.north west) coordinate (NW)
        (current bounding box.north east) coordinate (NE)
		($(SW)!.5!(NE)$) coordinate (M)
		($(NE)-(SW)$) coordinate (DIAG)
		let \p1=(DIAG) in
		(\x1,0) coordinate (HOR) 
		(0,\y1) coordinate (VER) 
		(\x1/32,0) coordinate (steph)
		(0,\y1/31) coordinate (stepv)
		($(M)-.5*(steph)$) coordinate (C);
        ;
	\def\ball#1#2#3{\path[#3] let \p1=#1 in (\p1) ++($#2*(steph)$)
		-- ($(\p1)+#2*(stepv)$)
		-- ($(\p1)-#2*(steph)$)
		-- ($(\p1)-#2*(stepv)$)
		--cycle;}
	\def\word#1#2{($(C)+#1*(steph)+#2*(stepv)$)}
	\fill[draw=red,line width=1.5pt,fill=black] \word{2}{1} circle (3.8pt);
	{
		\foreach \i/\j/\c in {
			0/0/red,3/4/green,6/8/green,9/12/green,-3/-4/green,-6/-8/green,-9/-12/green,
			4/-3/green,7/1/green,10/5/green,1/-7/green,-2/-11/green,-5/-15/green,
			-4/3/green,-7/-1/green,-10/-5/green,-1/7/green,2/11/green,5/15/green,
			8/-6/green,11/-2/green,14/2/green,5/-10/green,2/-14/green,
			-8/6/green,-11/2/green,-5/10/green,
			-12/9/green,-15/5/green,
			12/-9/green,15/-5/green,16/-12/green}
		{
			\ball{\word{\i}{\j}}{3}{draw,fill=\c,opacity=.6,very thin}
		}
	}
	\ball{\word{0}{0}}{3}{draw,fill=red,opacity=.1,very thin}
	\end{tikzpicture}
	\end{center}
    \caption{Querying $B_t(c)$ for each codeword, locating a faulty vertex.}
    \label{fig:usingperfect}
\end{figure}

\begin{figure}
    \begin{center}
    \begin{tikzpicture}
    \pgftext[base,left]{\includegraphics[width=.6\columnwidth]{images/28_-4_4_28code3_-4_4_3.pdf}}
    \path
        (current bounding box.south west) coordinate (SW)
        (current bounding box.south east) coordinate (SE)
        (current bounding box.north west) coordinate (NW)
        (current bounding box.north east) coordinate (NE)
        ($(SW)!.5!(NE)$) coordinate (M)
        ($(NE)-(SW)$) coordinate (DIAG)
        let \p1=(DIAG) in
        (\x1,0) coordinate (HOR) 
        (0,\y1) coordinate (VER) 
        (\x1/32,0) coordinate (steph)
        (0,\y1/31) coordinate (stepv)
        ($(M)-.5*(steph)$) coordinate (C);
        ;
    \def\ball#1#2#3{\path[#3] let \p1=#1 in (\p1) ++($#2*(steph)$)
        -- ($(\p1)+#2*(stepv)$)
        -- ($(\p1)-#2*(steph)$)
        -- ($(\p1)-#2*(stepv)$)
        --cycle;}
    \def\word#1#2{($(C)+#1*(steph)+#2*(stepv)$)}
    \fill[draw=red,line width=1.5pt,fill=black] \word{2}{1} circle (3.8pt);
    \ball{\word{0}{0}}{3}{draw,fill=red,opacity=.1,very thin}
    \filldraw[draw=black,fill=yellow] \word{2}{11} circle (3.8pt);
    \ball{\word{2}{11}}{12}{draw,fill=red,opacity=.6,very thin}
    \end{tikzpicture}
    \end{center}
    \caption{Querying a ball $B_r(c)$ centered in a codeword.}
    \label{fig:usingidentifiying}
\end{figure}

\begin{figure}
{
	\begin{center}
\begin{tikzpicture}[
		scale=0.8,text centered,
		query/.style={thick,draw,circle,font=\scriptsize,inner sep=1pt,minimum width=6ex},
		response/.style={thin,draw,rectangle,font=\tiny,inner sep=0,minimum width=3ex,minimum height=1em},
		every child node/.style={query,draw=black,fill=none},
		edge from parent/.append style={thin,font=\bf\tiny},
		level distance=2.2cm,
		level 1/.style={sibling distance=11cm},
		level 2/.style={sibling distance=5cm},
		level 3/.style={sibling distance=2.5cm},
		level 4/.style={sibling distance=1.3cm},
		level 5/.style={sibling distance=0.6cm},
		faulty/.style={draw=black},
		fault-free/.style={draw=black!75},
		camino node/.style={},
		camino link/.style={},
		]
	\path[blue] (-7.5,-8) -- (0,.5) -- (7.5,-8);
	\begin{scope}[overlay]
	\node[query,camino node]{2,11} [grow=down,edge from parent fork down,transform canvas={scale=0.7}]
		child {node[camino node]{-11,2}
			child {node {6,8}
				child {node {-8,6}
					child {node {10,5}
						child {node[response] {0,3}
							edge from parent[faulty]
						}
						child {node[response] {0,2}
							edge from parent[fault-free]
						}
						edge from parent[faulty]
					}
					child {node[response] {1,2}
						edge from parent[fault-free]
					}
					edge from parent[faulty]
				}
				child {node {-5,10}
					child {node[response] {3,0}
						edge from parent[faulty]
					}
					child {node[response] {2,0}
						edge from parent[fault-free]
					}
					edge from parent[fault-free]
				}
				edge from parent[faulty]
			}
			child {node[camino node] {8,-6}
				child {node {6,8}
					child {node {10,5}
						child {node[response] {3,0}
							edge from parent[faulty]
						}
						child {node[response] {2,0}
							edge from parent[fault-free]
						}
						edge from parent[faulty]
					}
					child {node[response] {2,-1}
						edge from parent[fault-free]
					}
					edge from parent[faulty]
				}
				child {node[camino node] {6,8}
					child {node[camino node] {10,5}
						child {node[response,camino node] {2,1}
							edge from parent[faulty,camino link]
						}
						child {node[response] {1,1}
							edge from parent[fault-free]
						}
						edge from parent[faulty,camino link]
					}
					child {node[response] {1,0}
						edge from parent[fault-free]
					}
					edge from parent[fault-free,camino link]
				}
				edge from parent[fault-free,camino link]
			}
			edge from parent[faulty,camino link]
		}
		child {node{-11,2}
			child {node{-6,-8}
				child {node {-8,6}
					child {node {-5,10}
						child {node[response] {-3,0}
							edge from parent[faulty]
						}
						child {node[response] {-2,0}
							edge from parent[fault-free]
						}
						edge from parent[faulty]
					}
					child {node[response] {-2,-1}
						edge from parent[fault-free]
					}
					edge from parent[faulty]
				}
				child {node{-8,6}
					child {node {-5,10}
						child {node[response] {-2,1}
							edge from parent[faulty]
						}
						child {node[response] {-1,1}
							edge from parent[fault-free]
						}
						edge from parent[faulty]
					}
					child {node[response] {-1,0}
						edge from parent[fault-free]
					}
					edge from parent[fault-free]
				}
				edge from parent[faulty]
			}
			child {node{-6,-8}
				child {node {8,-6}
					child {node {-10,-5}
						child {node[response] {0,-3}
							edge from parent[faulty]
						}
						child {node[response] {0,-2}
							edge from parent[fault-free]
						}
						edge from parent[faulty]
					}
					child {node {-10,-5}
						child {node[response] {-1,-2}
							edge from parent[faulty]
						}
						child {node[response] {-1,-1}
							edge from parent[fault-free]
						}
						edge from parent[fault-free]
					}
					edge from parent[faulty]
				}
				child {node{8,-6}
					child {node {5,-10}
						child {node[response] {1,-2}
							edge from parent[faulty]
						}
						child {node[response] {1,-1}
							edge from parent[fault-free]
						}
						edge from parent[faulty]
					}
					child {node {-2,-11}
						child {node[response] {0,-1}
							edge from parent[faulty]
						}
						child {node[response] {0,0}
					 		edge from parent[fault-free]
						}
						edge from parent[fault-free]
					}
					edge from parent[fault-free]
				}
				edge from parent[fault-free]
			}
			edge from parent[fault-free]
		}
	;
	\end{scope}
	\end{tikzpicture}
	\end{center}
	\caption{Binary decision tree for identification.}
	\label{fig:tree}
}
\end{figure}

\section{Conclusions}

In this paper a method for constructing a wide family of identifying codes over Cayley graphs of degree four defined by means of Abelian groups has been presented. These graphs include many previously known topologies for interconnection networks such as tori, twisted tori \cite{Sequin}, Midimews \cite{Beivide} and double loops in general \cite{Fiolnetworks}. The codes are constructed by means of subgroups of the generating group of the Cayley graph. As a consequence, the codewords are regularly distributed over the lattice so that the covering radius is relatively small. The identifying radius is characterized in terms of the generators of the subgroup as well as the covering radius. Moreover, some of the codes not only are identifying but also perfect. Therefore, the diagnosis process can be separated into two stages: a first stage for detecting the existence of a failure (using the small covering radius) and the identification of the corresponding vertex (using the identifying radius). Taking advantage of these facts, the adaptive identification process has been considered and a practical example of how it would perform has been addressed in the present paper. This special feature of the codes which are obtained with our method would make possible to assert that the construction gives suitable codes for a practical scenario.

\section*{Acknowledgments}
This work has been supported by the Spanish Ministry of Science under contracts
TIN2010-21291-C02-02, AP2010-4900 and CONSOLIDER Project CSD2007-00050, and by
the European HiPEAC Network of Excellence.

\bibliographystyle{abbrv}
\bibliography{main}


\end{document}